\theoremstyle{plain}
\theoremstyle{definition}\newtheorem{theorem}{Theorem}[section]
\theoremstyle{plain}
\theoremstyle{plain}\newtheorem{coro}[theorem]{Corollary}
\theoremstyle{plain}
\theoremstyle{remark}\newtheorem{remark}{Remark}[section]
\newcommand{\Div}{\mathrm{div}\,}
\newcommand{\B}{\Big}
\newcommand{\be}{\begin{equation}}
\newcommand{\ee}{\end{equation}}
 \newcommand{\ba}{\begin{aligned}}
 \newcommand{\ea}{\end{aligned}}
  \newcommand{\f}{\frac}
  \newcommand{\ben}{\begin{enumerate}}
   \newcommand{\een}{\end{enumerate}}
\newcommand{\Rmnum}[1]{\expandafter\@slowromancap\romannumeral #1@}
\numberwithin{equation}{section}
\begin{document}
\title{Four-thirds law of energy and magnetic helicity  in electron and Hall
magnetohydrodynamic fluids}
\author{Yanqing Wang\footnote{   College of Mathematics and   Information Science, Zhengzhou University of Light Industry, Zhengzhou, Henan  450002,  P. R. China Email: wangyanqing20056@gmail.com}   ~  
      and  Otto Chkhetiani\footnote{A. M. Obukhov Institute of Atmospheric Physics, Russian Academy of Sciences, Pyzhevsky per. 3, Moscow, 119017, Russia.  Email: ochkheti@ifaran.ru}  }
\date{}
\maketitle
\begin{abstract}
 In this paper, by exploiting the feature of the Hall term, we
   establish
   some local version  four-thirds   laws   for the dissipation rates of energy and  magnetic helicity in both electron and Hall
magnetohydrodynamic equations in the sense of  Duchon-Robert type.
New $4/3$ laws for  the dissipation rates of    magnetic helicity in these systems
are first observed and  four-thirds  law   involving  the dissipation rates of energy for the Hall
magnetohydrodynamic equations generalizes   the work of Galtier.

  \end{abstract}
\noindent {\bf MSC(2020):}\quad 76F02, 76B99, 35L65, 35L67, 35Q35 \\\noindent
{\bf Keywords:} Four-thirds law;  EMHD; Hall MHD; energy;    magnetic helicity;    
\section{Introduction}
\label{intro}
\setcounter{section}{1}\setcounter{equation}{0}

The energy distribution among scales and the energy flux in turbulence can be given in terms of third-order structure function in configuration space (see e.g. \cite{[AB],[Frisch],[MY]}). Two known exact relations for the third-order structural function in an incompressible fluid are Kolmogorov's $4/5$ law for longitudinal velocity pulsations in \cite{[Kolmogorov]}and Yaglom's $4/3$  law for mixed moments of the velocity and temperature fields in  \cite{[Yaglom]}.

 There exist  a lot of  generalized Kolmogorov    and  Yaglom type laws  involving the  energy, cross-helicity and helicity in
  the incompressible Euler equations, the magnetohydrodynamic system and other turbulence models.
  They are the few rigorous results in the theory of turbulence  and are confirmed by numerical simulation(see e.g. \cite{[AOAZ],[AB],[Chkhetiani1],[Chkhetiani],
 [GPP],[PP1],[PP2],[Podesta],[PFS],[YRS],[WWY1],[MY],[Galtier1],[FGSMA],
 [HVLFM],[Galtier]}). As \cite{[Kolmogorov],[Yaglom]}, almost all deductions of these laws   rely on the corresponding K\'arm\'an-Howarth  equations. Without an application of the  K\'arm\'an-Howarth  equations,
 the following version of  four-thirds law and  four-fifths law obtained in \cite{[Eyink1],[DR]} reads
\begin{align}
 &S_{1}(v)=-\f43D_{1}(v), \label{YL}\\
 &S_{L}(v)=-\f45D(v),\label{KL}
\end{align}
where
$$\ba
&S(v)=-\lim\limits_{\lambda\rightarrow0}S (v,\lambda)=-
\lim\limits_{\lambda\rightarrow0}\f{1}{\lambda}\int_{\partial B }\ell\cdot\delta v(\lambda\ell)|\delta v(\lambda\ell)|^{2}\f{d\sigma(\ell) }{4\pi}, \\ &S_{L}(v)=-\lim\limits_{\lambda\rightarrow0}S_{L}(v,\lambda)=-
\lim\limits_{\lambda\rightarrow0}\f{1}{\lambda}\int_{\partial B } \ell \cdot\delta v(\lambda\ell) |\delta v_{L}(\lambda\ell)|^{2}\f{d\sigma(\ell) }{4\pi},
\ea$$
and
 \be\label{drKHMr}\ba
&D_{1}(v)=-\lim\limits_{\varepsilon\rightarrow0}\f14
\int_{\mathbb{T}^{3}}\nabla\varphi_{\varepsilon}(\ell)\cdot\delta v(\ell)|\delta v(\ell)|^{2}d\ell,\\&D(v)=-\lim\limits_{\varepsilon\rightarrow0}\f14
\int_{\mathbb{T}^{3}}\nabla\varphi_{\varepsilon}(\ell)\cdot\delta v(\ell)|\delta v_{L}(\ell)|^{2}d\ell,
\ea
\ee
here, $\sigma(x)$ stands for the surface measure on the sphere $\partial B=\{x\in \mathbb{R}^{3}: |x|=1\}$ and $\varphi$ is  some smooth non-negative function  supported in $\mathbb{T}^{3}$ with unit integral and $\varphi_{\varepsilon}(x)=\varepsilon^{-3}\varphi(\f{x}{\varepsilon})$.
$\delta v_{L}(r)=\delta v (r)\cdot\f{r}{|r|}=(v(x+r)-v(x))\cdot\f{r}{|r|}$ stands for the longitudinal velocity increment.
The dissipation term $\eqref{drKHMr}_{1}$ was initial   by Duchon-Robert in \cite{[DR]}. Very recently, in the spirit of \cite{[DR]},  the first four-thirds relation for the Oldroyd-B model and  six new 4/3 laws for the subgrid scale $\alpha$-models of turbulence were obtained in \cite{[WWY1]}. Moreover, in \cite{[WWY1]}, almost all 4/3 relation  in the temperature equation,  the inviscid MHD equations and the Euler equations   can be written in the form of \eqref{YL}.
The similarity of various  turbulence models in \cite{[WWY1]}
  is the nonlinear term in terms of convection type.  Besides the standard MHD equations and Leray-$\alpha$ MHD equations in \cite{[BT]}, the electronic (EMHD) and Hall (HMHD)
magnetohydrodynamic equations play  an important role in the theory of
plasma (see e.g. \cite{[Vainshtein73],[Galtier1],[Chkhetiani2],[Biskamp99],[KCY]}   and references therein).  Both the electronic (EMHD)
and Hall (HMHD) magnetohydrodynamic system enjoy the energy and  helicity conserved laws (see \cite{[Chkhetiani2],[Galtier1]}). The authors in \cite{[WWY1]} pointed out that the  dissipation term of conserved quantity as $\eqref{drKHMr}_{1}$  immediately a 4/3 relation.
Based on this, a natural question is whether there exist    four-thirds relations of energy and helicity in electronic and Hall
magnetohydrodynamic. The objective of this paper is to consider this issue.  Before we state the main results, we recall the following EMHD equation
\be\label{plasmahydrodynamics}
b_{t}+\text{d}_{\text{I}}\nabla\times[(\nabla\times b)\times b]=0,\text{div}b=0,
\ee
where $b$ represents   the magnetic field and $\text{d}_{\text{I}}$ stands for the ion inertial length. Without loss of generality, we set $\text{d}_{\text{I}}=1$.
Cascade processes in such a representation were considered in \cite{[Frick03]}. We formulate the result involving the EMHD equations as follows.
     \begin{theorem}\label{the1.1}  Let
 $b$ be a weak solution of the EMHD equations
 \eqref{plasmahydrodynamics} and the electric current $\vec{j}=\nabla\times b$.  Assume that for any $1<p,q,m,n<\infty$ with  $\f2p+\f1m=1,\f2q+\f1n=1 $ such that $(b, v)$ satisfies
 \be\label{the1.1c}
b\in L^{\infty}(0,T;L^{2}(\mathbb{T}^{3}))\cap L^{p}(0,T;L^{q}(\mathbb{T}^{3}))\ \text{and} ~~\vec{j}\in L^{m}(0,T;L^{n}(\mathbb{T}^{3})). \ee
  Then the function
\be\label{drdissipationterm}
D(b,\vec{j};\varepsilon)= \f18\int_{\mathbb{T}^{3}}\nabla\varphi_{\varepsilon}(\ell)\cdot\delta
\vec{ j}(\ell) |\delta  b(\ell)|^{2} d\ell-\f14\int_{\mathbb{T}^{3}}\nabla\varphi_{\varepsilon}(\ell)\cdot\delta b(\ell) | \delta  \vec{j}(\ell)\cdot\delta b(\ell)|  d\ell,\ee  converges to a distribution $D(b,\vec{j})$ in the sense of distributions as $\varepsilon\rightarrow0$, and $D(b,\vec{j})$ satisfies the local equation of energy
 $$ \partial_{t}(\f12|b|^{2})  +\f12\text{div}( [\text{div}(b\otimes b)\times b]
-\f14\text{div} (\vec{ j}|b|^{2})+ \f12\text{div}(b\vec{ j}\cdot b)=D(b,\vec{j}),
$$
in the sense of distributions.
Moreover, there holds the following $4/3$ law
\be\label{EMHDYaglom4/3law}
-\f12S_{1}(\vec{j},b,b)+S_{2}(b,\vec{j},b)=-\f43D(v,\theta),
\ee
where $$\ba
&S_{1}(\vec{j},b,b)=-\lim\limits_{\lambda\rightarrow0}S_{1} ( \vec{j},b,b;\lambda)=-
\lim\limits_{\lambda\rightarrow0}\f{1}{\lambda}\int_{\partial B } \ell \cdot\delta \vec{j} (\lambda\ell)|\delta b(\lambda\ell)|^{2}\f{d\sigma(\ell) }{4\pi},
\\
&S_{2}(b,\vec{j},b)=-\lim\limits_{\lambda\rightarrow0}S _{2}(b,\vec{j}b;\lambda)=-
\lim\limits_{\lambda\rightarrow0}\f{1}{\lambda}\int_{\partial B } \ell \cdot\delta b (\lambda\ell)|\delta \vec{j}(\lambda\ell)\cdot\delta b(\ell)| \f{d\sigma(\ell) }{4\pi}.
\ea$$
 \end{theorem}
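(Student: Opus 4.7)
I will adapt the Duchon--Robert mollification technique to the EMHD nonlinear structure $\nabla\times(\vec{j}\times b)$. Let $\varphi_\varepsilon$ be the standard mollifier, set $b^\varepsilon = b \ast \varphi_\varepsilon$, and observe that $\vec{j}^\varepsilon = \nabla\times b^\varepsilon = \vec{j} \ast \varphi_\varepsilon$. Mollifying \eqref{plasmahydrodynamics}, dotting with $b^\varepsilon$, and using the calculus identity $b^\varepsilon \cdot (\nabla\times A) = \nabla \cdot (A \times b^\varepsilon) + A \cdot \vec{j}^\varepsilon$ with $A = [\vec{j} \times b]^\varepsilon$ yields
\[
\partial_t\bigl(\tfrac{1}{2}|b^\varepsilon|^2\bigr) + \nabla \cdot \bigl([\vec{j}\times b]^\varepsilon \times b^\varepsilon\bigr) = -[\vec{j}\times b]^\varepsilon \cdot \vec{j}^\varepsilon.
\]
The scalar product on the right is the anomaly. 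Substituting the Constantin--E--Titi commutator identity
\[
[fg]^\varepsilon(x) - f^\varepsilon(x) g^\varepsilon(x) = \int \varphi_\varepsilon(\ell)\,\delta f(\ell)\,\delta g(\ell)\,d\ell - (f - f^\varepsilon)(g - g^\varepsilon)(x)
\]
into each scalar component of $\vec{j}\times b$, exploiting the algebraic identity $(\vec{j}\times b)\times b = (\vec{j}\cdot b)b - |b|^2\vec{j}$ together with $\vec{j}^\varepsilon\cdot(\vec{j}^\varepsilon\times b^\varepsilon)=0$, and integrating by parts against $\nabla\varphi_\varepsilon$, the anomaly reorganizes (after pushing its divergence part into the local fluxes of the statement) into exactly the two pieces defining $D(b,\vec{j};\varepsilon)$ in \eqref{drdissipationterm}: the cubic Duchon--Robert term $\tfrac{1}{8}\int \nabla\varphi_\varepsilon\cdot\delta\vec{j}\,|\delta b|^2\, d\ell$ and the mixed cross term $-\tfrac{1}{4}\int \nabla\varphi_\varepsilon\cdot\delta b\,(\delta\vec{j}\cdot\delta b)\, d\ell$.

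Distributional convergence $D(b,\vec{j};\varepsilon)\to D(b,\vec{j})$ as $\varepsilon\to 0$ follows by a H\"older-scaling estimate based on \eqref{the1.1c}. Since $\nabla\varphi_\varepsilon$ is supported on $|\ell|\le \varepsilon$ with $|\nabla\varphi_\varepsilon|\lesssim \varepsilon^{-4}$, the conditions $\tfrac{2}{p}+\tfrac{1}{m}=1$ and $\tfrac{2}{q}+\tfrac{1}{n}=1$ are precisely what is needed to give, for any test function $\psi$,
\[
\bigl|\langle D(b,\vec{j};\varepsilon),\psi\rangle\bigr| \lesssim \|\psi\|_{L^\infty}\,\|\vec{j}\|_{L^m_t L^n_x}\,\|b\|^2_{L^p_t L^q_x}.
\]
A uniform bound together with a standard Cauchy-in-$\varepsilon$ argument as in \cite{[DR]} identifies the limit and yields the local energy equation displayed in the statement; the flux triple $\tfrac{1}{2}[\mathrm{div}(b\otimes b)\times b]$, $-\tfrac{1}{4}\vec{j}|b|^2$, $\tfrac{1}{2}b(\vec{j}\cdot b)$ arises from the divergence piece after using $\mathrm{div}\,b=0$ and averaging the two equivalent representations $(\vec{j}\times b)\times b = (\vec{j}\cdot b)b - |b|^2\vec{j} = [\mathrm{div}(b\otimes b)-\tfrac{1}{2}\nabla|b|^2]\times b$.

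The $4/3$ law \eqref{EMHDYaglom4/3law} is then extracted under statistical homogeneity and isotropy by passing to polar coordinates $\ell = \lambda\omega$, $\omega\in\partial B$. A Fubini-type calculation, identical in spirit to the classical Yaglom derivation, converts $\int_{\mathbb{T}^3} \nabla\varphi_\varepsilon\cdot F(\ell)\,d\ell$, where $F$ is either $\delta\vec{j}\,|\delta b|^2$ or $\delta b\,(\delta\vec{j}\cdot\delta b)$, into the product of a radial weight $\int_0^\infty \lambda^2\partial_\lambda\varphi_\varepsilon(\lambda)\,d\lambda$ and a sphere average collapsing to $S_1(\vec{j},b,b)$ or $S_2(b,\vec{j},b)$. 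The combinatorial factor $-\tfrac{4}{3}$ then emerges from the radial integral together with the prefactors $\tfrac{1}{8}$ and $-\tfrac{1}{4}$ in $D(b,\vec{j};\varepsilon)$, along the same template as \cite{[WWY1]}.

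The principal obstacle is the algebraic reorganization in the first paragraph. The scalar $[\vec{j}\times b]^\varepsilon\cdot\vec{j}^\varepsilon$ is not a priori a divergence modulo small remainders, so one must carefully exploit the antisymmetry of the cross product to eliminate the $(f-f^\varepsilon)(g-g^\varepsilon)$ remainders in the Constantin--E--Titi decomposition and to pin down the specific weights $\tfrac{1}{8}$ and $-\tfrac{1}{4}$ that reabsorb all boundary contributions into the two canonical Duchon--Robert integrals of \eqref{drdissipationterm} without leftover terms. Once this symmetric splitting is established, the convergence and $4/3$-law steps proceed by familiar arguments.
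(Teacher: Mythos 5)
There is a genuine gap, and it sits exactly at the point you yourself flag as ``the principal obstacle.'' Your starting move --- mollify the equation and dot with $b^{\varepsilon}$, so that the anomaly is $-[\vec{j}\times b]^{\varepsilon}\cdot\vec{j}^{\varepsilon}$, then apply the Constantin--E--Titi commutator identity --- produces a dissipation functional weighted by $\varphi_{\varepsilon}(\ell)$, namely (after using $\vec{j}^{\varepsilon}\cdot(\vec{j}^{\varepsilon}\times b^{\varepsilon})=0$) an expression of the form $\int\varphi_{\varepsilon}(\ell)\,\delta\vec{j}\cdot(\delta\vec{j}\times\delta b)\,d\ell$ plus $(f-f^{\varepsilon})(g-g^{\varepsilon})$ remainders. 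That is precisely Galtier's dissipation term recalled in Remark 1.2 of the paper, \emph{not} the $\nabla\varphi_{\varepsilon}$-weighted functional $D(b,\vec{j};\varepsilon)$ of \eqref{drdissipationterm}. The two regularizations may share a distributional limit under enough regularity, but the theorem asserts convergence of the specific expression \eqref{drdissipationterm}, and the $\nabla\varphi_{\varepsilon}$ structure is what makes the polar-coordinate computation collapse onto $S_{1}$ and $S_{2}$. You do not supply the bridge from one to the other, and asserting that the anomaly ``reorganizes into exactly the two pieces'' is where the proof actually lives.

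The paper gets \eqref{drdissipationterm} by two ingredients you are missing. First, it uses the symmetric Duchon--Robert pairing $\partial_{t}(b\cdot b^{\varepsilon})$ (mollified equation dotted with $b$, plus the unmollified equation dotted with $b^{\varepsilon}$), which is what generates identities such as \eqref{2.26} and \eqref{2.17} expressing $\int\nabla\varphi_{\varepsilon}\cdot\delta\vec{j}\,|\delta b|^{2}\,d\ell$ and $\int\nabla\varphi_{\varepsilon}\cdot\delta b\,(\delta\vec{j}\cdot\delta b)\,d\ell$ in terms of mollified products. Second --- and this is the structural insight specific to the Hall term --- it averages \emph{two} equivalent forms of the nonlinearity, $\mathrm{div}(b\otimes\vec{j})-\mathrm{div}(\vec{j}\otimes b)$ from \eqref{eq1} and $\nabla\times[\mathrm{div}(b\otimes b)]$ from \eqref{eq2}; only the second form, via the identity $\eqref{VI}_{3}$, produces the terms $[\mathrm{div}(b\otimes b)]^{\varepsilon}\cdot\vec{j}+[\mathrm{div}(b\otimes b)]\cdot\vec{j}^{\varepsilon}$ needed to match \eqref{drdissipationterm}, and the weights $\tfrac18$ and $-\tfrac14$ come from the $\tfrac14$ prefactor created by this averaging (see \eqref{Hkey1} and \eqref{2.15}). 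Your single-form computation with $(\vec{j}\times b)\times b=(\vec{j}\cdot b)b-|b|^{2}\vec{j}$ does not generate these terms. Two smaller points: the $4/3$ law in the paper is a deterministic, local statement obtained from the radial integral $\int_{0}^{\infty}r^{3}\varphi'(r)\,dr=-\tfrac{3}{4\pi}$, so no homogeneity or isotropy hypothesis should be invoked; and your convergence estimate is fine in spirit but is applied to the wrong functional until the algebraic identification above is actually carried out.
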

 \begin{remark}
 A kind local equation of energy  for the   Hall-MHD equations \eqref{hallMHD}
  with the dissipation term $D(\vec{j},\vec{j},b,\varepsilon)=\f12\int\varphi_{\varepsilon}(\ell)[\delta\vec{ j}\cdot\delta(\vec{j}\times b)]$
 was derived by Galtier in \cite{[Galtier]} and its   four-three law  can be found in \cite{[Galtier1]}. EMHD system \eqref{plasmahydrodynamics}   can be viewed as a sub-system of the Hall-MHD equations \eqref{hallMHD}, therefore,  this theorem generalizes the corresponding results \cite{[Galtier],[Galtier1]}.
 \end{remark}
  \begin{remark}
 It is worth remarking that the  dissipation term \eqref{drKHMr}  for the energy in the EMHD is similar to the one for the helicity in the Euler equations in \cite{[WWY1]}. Meanwhile, the structure of  dissipation term \eqref{drdissipationterm2}  for the  magnetic helicity in the EMHD is the same as the one for the energy  in the Euler equations in \cite{[DR]}.
\end{remark}
 \begin{remark}
 It is shown that the helicity is conserved provided that  $v\in L^3(0,T;B^{\f23}_{3,q^{\natural}})$ with $q^{\natural}<\infty$ in \cite{[DKL]}. Hence, if $m\geq3$ in \eqref{the1.1c}, we require $n<9/4$ in this theorem. Since a special case of \eqref{the1.1c} is $p=m=3$, $q=\f92$ and $n=\f95$, the condition \eqref{the1.1c} in no empty.
\end{remark}
Compared with nonlinear term  in terms of convection type of  the models in \cite{[DR],[WWY1]},    the Hall term $\nabla\times[(\nabla\times b)\times b]$ in    the EMHD and HMHD equations
involves the second order derivative rather than    the first order derivative.  To establish \eqref{EMHDYaglom4/3law}, a natural strategy is to reformulate  the the Hall term $\nabla\times[(\nabla\times b)\times b]$ as a convection type to apply the following equations
$$ b_{t}+  \text{div}(b\otimes\vec{j})- \text{div}(\vec{j}\otimes b)=0.$$
However, the EMHD equations in this form  still do not match the dissipation term  \eqref{drdissipationterm} directly. Precisely, the left hand side of \eqref{2.7} is lack of the term  $[\text{div}(b\otimes b)]^{\varepsilon}\cdot \vec{j} +[\text{div}(b\otimes b)]\cdot \vec{j}^{\varepsilon} $. Fortunately,
when we study the 4/3 laws for the magnetic helicity in this system, we observe that if we replace the magnetic vector potential $A$ in
\eqref{h1}  and \eqref{h2} by $B$, we immediately derive this desired term, which  inspires us to use the  following equivalent form of  the EMHD
$$ b_{t}+\nabla\times[\text{div}(b\otimes b)]=0.$$ Based on this, we get the critical equation \eqref{2.15}, which  is appropriate for the dissipation term  \eqref{drdissipationterm}.  Indeed, we will provide two slightly different methods  to obtain \eqref{2.15}.  This together with technique used in \cite{[DR],[WWY1]} help us to
achieve the desired relation \eqref{EMHDYaglom4/3law}.

As \cite{[DR],[WWY1]}, we apply the dissipation term \eqref{drdissipationterm}
to establish new  sufficient condition for implying  magnetic helicity conservation  of weak solutions of EMDH equations
 \eqref{plasmahydrodynamics}.
 \begin{coro}\label{coro1.6}
 We use the notations in Theorem \ref{the1.1}.
Assume that $b$ and $\vec{ j}$ satisfy
\be\label{1.23}
\ba
&\B(\int_{\mathbb{T}^{3}}|b(x+\ell,t)-b(x,t)|^{\f{9}{2}}dx\B)^{\f{2}{9}}\leq C(t)^{\f{1}{r_{1}}}|\ell|^{\alpha}\sigma^{\f13}(\ell),\\ &\B(\int_{\mathbb{T}^{3}}|\vec{ j}(x+\ell,t)-\vec{ j}(x,t)|^{\f95}dx\B)^{\f59}\leq C(t)^{\f{1}{r_{2}}}|\ell|^{\beta}\sigma^{\f13}(\ell),\\
&\text{with}\ \f{2}{r_{1}}+\f{1}{r_{2}}=1, 1<r_{1},r_{2}<\infty, 2\alpha+\beta\geq1,
\ea\ee
where both of $C_{i}(t)$ for $i=1,2$  are integrable functions on $[0,T]$, and $\sigma_{i}(\ell)$ for $i=1,2$  are both bounded functions on some neighborhood of the origin. Suppose that at least one of  $\sigma_{i}(\ell)$ obeys $\sigma_{i}(\ell) \rightarrow0$ as $\ell \rightarrow0$. Then the energy is conserved.
 \end{coro}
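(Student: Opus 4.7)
The strategy is to show that the Duchon--Robert-type dissipation $D(b,\vec j;\varepsilon)$ from \eqref{drdissipationterm} converges to zero in the sense of distributions as $\varepsilon\to 0$. Once this is known, the local energy identity of Theorem \ref{the1.1} forces $D(b,\vec j)\equiv 0$, and testing against a purely temporal cutoff (that is, integrating in the spatial variable over $\mathbb{T}^3$) yields the conservation of $\tfrac12\|b(t)\|_{L^2}^2$.

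I would exploit the fact that $\nabla\varphi_\varepsilon(\ell)=\varepsilon^{-4}(\nabla\varphi)(\ell/\varepsilon)$ is supported in $\{|\ell|\le\varepsilon\}$ and satisfies $|\nabla\varphi_\varepsilon(\ell)|\le C\varepsilon^{-4}$. Both terms on the right-hand side of \eqref{drdissipationterm} share the trilinear structure $\nabla\varphi_\varepsilon(\ell)\cdot\delta\vec j(\ell)\,|\delta b(\ell)|^2$, after using the elementary pointwise bound $|\delta\vec j\cdot\delta b|\le|\delta\vec j||\delta b|$ in the second term. For each fixed $\ell$ I apply H\"older's inequality in $x$ with exponents $(9/5,\,9/2,\,9/2)$, valid because $5/9+2\cdot(2/9)=1$, to obtain
\[
\int_{\mathbb{T}^3}|\delta\vec j(\ell)|\,|\delta b(\ell)|^2\,dx\;\le\;\bigl\|\delta\vec j(\cdot,\ell)\bigr\|_{L^{9/5}}\bigl\|\delta b(\cdot,\ell)\bigr\|_{L^{9/2}}^{2},
\]
and then invoke \eqref{1.23} to control this by $C\,C_1(t)^{2/r_1}C_2(t)^{1/r_2}\,|\ell|^{2\alpha+\beta}\sigma_1(\ell)^{2/3}\sigma_2(\ell)^{1/3}$.

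Next, I would integrate in $\ell$ against $|\nabla\varphi_\varepsilon(\ell)|\le C\varepsilon^{-4}$ over the support $\{|\ell|\le\varepsilon\}$, whose volume is of order $\varepsilon^3$, and use $|\ell|^{2\alpha+\beta}\le\varepsilon^{2\alpha+\beta}$ to reach the pointwise-in-$t$ bound
\[
|D(b,\vec j;\varepsilon)(t)|\;\le\;C\,C_1(t)^{2/r_1}C_2(t)^{1/r_2}\,\varepsilon^{\,2\alpha+\beta-1}\sup_{|\ell|\le\varepsilon}\bigl(\sigma_1(\ell)^{2/3}\sigma_2(\ell)^{1/3}\bigr).
\]
The hypothesis $2\alpha+\beta\ge 1$ keeps $\varepsilon^{2\alpha+\beta-1}$ uniformly bounded as $\varepsilon\to 0$, while the assumption that at least one of the $\sigma_i$ vanishes at the origin and the other remains locally bounded drives the supremum to $0$. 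Finally, H\"older's inequality in $t$ with exponents $(r_1/2,r_2)$---permitted exactly by the compatibility relation $2/r_1+1/r_2=1$ together with $C_1,C_2\in L^1(0,T)$---and the dominated convergence theorem give $D(b,\vec j;\varepsilon)\to 0$ in $L^1(0,T)$, hence distributionally.

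The principal difficulty is the combinatorial alignment of exponents: the Besov-type indices $9/2$ for $b$ and $9/5$ for $\vec j$, the spatial scaling threshold $2\alpha+\beta\ge 1$, the temporal scaling $2/r_1+1/r_2=1$, and the use of one vanishing and one bounded $\sigma_i$ must cooperate so that every factor emerging from H\"older has a finite or decaying counterpart. Everything else---passing to the limit in the local energy identity of Theorem \ref{the1.1} and testing against a constant-in-$x$ cutoff on $\mathbb{T}^3$---is routine once the Duchon--Robert dissipation has been shown to vanish.
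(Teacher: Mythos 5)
Your proposal is correct and follows essentially the same route as the paper: H\"older in $x$ with exponents $(9/5,9/2,9/2)$ to bound the trilinear integrand by $\|\delta\vec j\|_{L^{9/5}}\|\delta b\|_{L^{9/2}}^{2}$, insertion of \eqref{1.23}, extraction of the factor $\varepsilon^{2\alpha+\beta-1}$ from the mollifier scaling, H\"older in $t$ via $2/r_1+1/r_2=1$, and the vanishing of one $\sigma_i$ to kill the limit of $D(b,\vec j;\varepsilon)$. The only cosmetic difference is that the paper obtains the decay by the change of variables $\ell=\varepsilon\xi$ and dominated convergence applied to $\int|\nabla\varphi(\xi)||\xi|^{2\alpha+\beta}\sigma(\varepsilon\xi)\,d\xi$, whereas you use the cruder bound $|\nabla\varphi_\varepsilon|\le C\varepsilon^{-4}$ together with $\sup_{|\ell|\le\varepsilon}\sigma_i\to 0$; both work.
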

\begin{remark}
Corollary \ref{coro1.6} implies that $ b\in L^{r_{1}}(0,T; B^{\alpha}_{\f92,\infty})$ and
$ \vec{ j}\in L^{r_{1}}(0,T; B^{\beta}_{\f95,\infty})$ with $2\alpha+\beta>1$ and $\f{2}{r_{1}}+\f{1}{r_{2}}=1$ guarantee that the energy of weak solutions of the EMHD is invariant. This is close to the helicity conservation criterion proved by Chae in \cite{[Chae1]}. 
  \end{remark}
Next, we consider the second conserved quantity
magnetic helicity
\be\label{magnetic helicity}
\int_{\mathbb{T}^d}  A\cdot {\rm curl\,} A\  dx,
\ee
 as a
topological invariant of the motion of plasma,
where $ A={\rm curl}^{-1}b$ stands for   the magnetic vector potential.  From EMHD equations \eqref{plasmahydrodynamics},  we deduce the magnetic vector potential equations
\be\label{magneticpotentialeq}
A_{t}+ (\nabla\times b)\times b   +\nabla \pi=0, \text{div}A=0.
\ee

\begin{theorem}\label{the1.2}
 Let
 $b$ be a weak solution of EMDH equations
 \eqref{plasmahydrodynamics} and magnetic vector potential $A$  satisfy \eqref{magneticpotentialeq}.  Assume that
 \be\label{the1.2c}
\vec{j} \in L^{\infty}(0,T;L^{\f32}(\mathbb{T}^{3})) \ \text{and}\ A\in C((0,T)\times\mathbb{T}^{3}). \ee
  Then the function
\be \label{drdissipationterm2}D_{mh}(b,\varepsilon )= -\f12\int_{\mathbb{T}^{3}}\nabla\varphi_{\varepsilon}(\ell)\cdot\delta b(\ell) | \delta b(\ell)|^{2}  d\ell,\ee  converges to a distribution $D_{mh}(b)$ in the sense of distributions as $\varepsilon\rightarrow0$, and $D_{mh}(b)$ satisfies the local energy balance
 $$ \ba
& \partial_{t}(bA )
 + \text{div}([\text{div}(b\otimes b)]\times A )   + \text{div}[ \pi b ]+ \text{div}(b|b|^{2})=D_{mh}(b)
\ea$$
in the sense of distributions.
Moreover, there holds the following $4/3$ law
\be\label{EMHDMHYaglom4/3law}
 S (b,b,b)=-\f43D_{mh}(b),
\ee
where $$\ba
&S (b,b,b)=-\lim\limits_{\lambda\rightarrow0}S  ( b,b,b;\lambda)=-
\lim\limits_{\lambda\rightarrow0}\f{1}{\lambda}\int_{\partial B } \ell \cdot\delta b (\lambda\ell)|\delta b(\lambda\ell)|^{2}\f{d\sigma(\ell) }{4\pi}.
 \ea$$
 \end{theorem}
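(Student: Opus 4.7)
The plan is to mimic the Duchon--Robert regularization strategy already used in Theorem \ref{the1.1}, now testing the mollified magnetic vector potential equation against $b$ (and vice versa) instead of testing the EMHD equation against $\vec{j}$. First I would mollify \eqref{magneticpotentialeq} and \eqref{plasmahydrodynamics} against $\varphi_{\varepsilon}$, take the dot product of the $A$-equation with $b^{\varepsilon}$ (equivalently, of the $A^{\varepsilon}$-equation with $b$), add the symmetric counterpart, and compute $\partial_{t}(A\cdot b)$ in two ways. The key structural identity is the vector-calculus one $\nabla\cdot(F\times G)=(\nabla\times F)\cdot G - F\cdot(\nabla\times G)$, applied with $F=\text{div}(b\otimes b)$ and $G=A$; combined with the decomposition $(\nabla\times b)\times b = \text{div}(b\otimes b) - \tfrac{1}{2}\nabla|b|^{2}$ and $\nabla\times A = b$, this produces exactly the three flux terms $\text{div}([\text{div}(b\otimes b)]\times A)$, $\text{div}(\pi b)$, and $\text{div}(b|b|^{2})$ on the left hand side of the claimed local balance.

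The defect emerges from the commutator between the mollifier and the cubic part of the Hall nonlinearity. The elementary Duchon--Robert identity
\begin{equation*}
\int_{\mathbb{T}^{3}} (fg)^{\varepsilon} h\, dx - \int_{\mathbb{T}^{3}} (fg)\, h^{\varepsilon}\, dx = \tfrac12\int_{\mathbb{T}^{3}}\nabla\varphi_{\varepsilon}(\ell)\cdot \delta h(\ell)\,\delta f(\ell)\,\delta g(\ell)\,d\ell + \text{(vanishing remainders)},
\end{equation*}
specialized to $f=g=h=b$, collects after bookkeeping into $-\tfrac12\int\nabla\varphi_{\varepsilon}(\ell)\cdot\delta b(\ell)|\delta b(\ell)|^{2}\,d\ell = D_{mh}(b,\varepsilon)$. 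This is where the paper's swap of $A$ and $b$ in the cross-helicity calculation is used: the naive manipulation would leave a defect cubic in the triple $(A,b,b)$, but rewriting one factor of $A$ as a curl inverse of $b$ and integrating by parts moves the derivative back onto $\varphi_{\varepsilon}$ and yields a purely cubic-in-$b$ object fit for the sphere average. Convergence of $D_{mh}(b,\varepsilon)$ to a distribution then follows from the Biot--Savart estimate on $\mathbb{T}^{3}$: the hypothesis $\vec{j}\in L^{\infty}_{t}L^{3/2}_{x}$ gives $b\in L^{\infty}_{t}L^{3}_{x}$, so $|\delta b|^{3}$ is integrable uniformly in $\varepsilon$, while continuity of $A$ controls the flux pairing $[\text{div}(b\otimes b)]\times A$ when tested against a smooth function.

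With $D_{mh}(b,\varepsilon)\to D_{mh}(b)$ in hand, the $4/3$ law \eqref{EMHDMHYaglom4/3law} is obtained by repeating the Duchon--Robert averaging procedure: choose $\varphi$ radial, rescale $\ell=\lambda r$, pass to spherical coordinates, and use $\int_{0}^{\infty}r^{3}\varphi'(r)\,dr = -3\int_{0}^{\infty}r^{2}\varphi(r)\,dr$ together with the unit-mass normalization to extract the factor $4/3$. The main obstacle I anticipate is the algebraic reorganization above: because the Hall nonlinearity $\nabla\times[(\nabla\times b)\times b]$ carries a second derivative, the standard DR identity does not apply off the shelf, and one must first recast this term in convective form and then exploit the $A\leftrightarrow b$ symmetry of the helicity pairing to shift the extra derivative onto the mollifier. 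Once this is carried out, the passage to the limit is a routine H\"older bookkeeping given the sharpness of \eqref{the1.2c}.
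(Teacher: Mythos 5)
Your proposal follows essentially the same route as the paper: rewrite the potential equation as $A_{t}+\text{div}(b\otimes b)+\nabla\pi=0$, pair the mollified $b$- and $A$-equations symmetrically, use the identity $\vec{A}\cdot(\nabla\times \vec{B})=\text{div}(\vec{B}\times\vec{A})+\vec{B}\cdot(\nabla\times\vec{A})$ with $\nabla\times A=b$ to turn the helicity pairing into a purely cubic-in-$b$ defect, apply the Duchon--Robert commutator identity with $f=g=h=b$, and extract the $4/3$ factor from $\int_{0}^{\infty}r^{3}\varphi'(r)\,dr=-3\int_{0}^{\infty}r^{2}\varphi(r)\,dr$. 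The convergence argument (Biot--Savart giving $b\in L^{\infty}_{t}L^{3}_{x}$, continuity of $A$ for the flux term, and the Calder\'on--Zygmund bound on $\pi$) also matches the paper's, so the proposal is correct and not materially different.
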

\begin{remark}
Unlike 4/3 law  \eqref{EMHDYaglom4/3law} for the energy,
to the knowledge of the authors, the  four-thirds relationship \eqref{EMHDMHYaglom4/3law} of magnetic   helicity in   system \eqref{plasmahydrodynamics} is completely new. It is an interesting question to derive \eqref{EMHDMHYaglom4/3law} via the K\'arm\'an-Howarth  equations.
\end{remark}
\begin{remark}
As Corollary \eqref{coro1.6}, the dissipation term \eqref{drdissipationterm2} means that weak solutions of the EMHD  preserve the magnetic helicity  if  $ b\in L^{r_{1}}(0,T; B^{\alpha}_{3,\infty})$  with $\alpha >1/3$ .
\end{remark}
 We turn our attention to the following Hall MHD equations
 \be\left\{\ba\label{hallMHD}
&u_{t}+u\cdot\nabla u-b\cdot\nabla b+\nabla\Pi =0, \\
&b_{t}+u\cdot\nabla b-b\cdot\nabla u+\nabla\times[(\nabla\times b)\times b] =0, \\
&\Div u=\Div b=0,
 \ea\right.\ee
 where $v$ represents the velocity field of the flow and
 $\Pi$ stands for  the  pressure of the fluid, respectively.
The next goal is to extend the   four-thirds  law of energy and helicity  from the  electron magnetohydrodynamic  system \eqref{plasmahydrodynamics} to the  Hall
magnetohydrodynamic equations \eqref{hallMHD}.

\begin{theorem}\label{the1.4}
 Let  the pair
 $(u, b)$ be a weak solution of HMHD equations
 \eqref{hallMHD}.  Assume that for any $1<p,q,m,n<\infty$ with  $\f2p+\f1m=1,\f2q+\f1n=1 $ such that $(\theta, v)$ satisfies
 \be\label{the1.3c}\ba
& u  \in L^{\infty}(0,T;L^{2}(\mathbb{T}^{3}))\cap L^{3}(0,T;L^{3}(\mathbb{T}^{3})),\\
& b\in L^{\infty}(0,T;L^{2}(\mathbb{T}^{3}))\cap L^{p}(0,T;L^{q}(\mathbb{T}^{3}))\ \text{and} ~~\vec{j}\in L^{m}(0,T;L^{n}(\mathbb{T}^{3})).
 \ea\ee
  Then the function
$$\ba
D(u,b,\vec{j};\varepsilon)=& -\f14\int_{\mathbb{T}^{3}}\nabla\varphi_{\varepsilon}(\ell)\cdot\delta u(\ell)|\delta  u(\ell)|^{2}d\ell -\f14\int_{\mathbb{T}^{3}}\nabla\varphi_{\varepsilon}(\ell)\cdot\delta b(\ell) | \delta  u(\ell)\cdot\delta b(\ell)|  d\ell\\&+\f12 \int_{\mathbb{T}^{3}}\nabla\varphi_{\varepsilon}(\ell)\cdot\delta u(\ell)|\delta  b(\ell)\cdot\delta  b(\ell)| d\ell\\&+ \f18\int_{\mathbb{T}^{3}}\nabla\varphi_{\varepsilon}(\ell)\cdot\delta\vec{ j}(\ell) |\delta  b(\ell)|^{2} d\ell-\f14\int_{\mathbb{T}^{3}}\nabla\varphi_{\varepsilon}(\ell)\cdot\delta b(\ell) | \delta  \vec{j}(\ell)\cdot\delta b(\ell)|  d\ell
\ea$$  converges to a distribution $D(u,b,\vec{j})$ in the sense of distributions as $\varepsilon\rightarrow0$, and $D(u,b,\vec{j})$ satisfies the local equation of energy
 $$\ba &\partial_{t}(\f{u^{2}+b^{2}}{2}  )  +\text{div}\B[u\B( \f12(|u|^{2}+|b |^{2})+\Pi\B)-b(b\cdot u)\B] \\& +\f12\text{div}( [\text{div}(b\otimes b)\times b]
-\f14\text{div} (\vec{ j}|b|^{2})+ \f12\text{div}(b\vec{ j}\cdot b)=D(u,b,\vec{j}),
\ea$$
in the sense of distributions.
Moreover, there holds the following $4/3$ law
\be\label{HMHDYaglom4/3law}
 S_{3}(u,u,u)+S_{4}(u,b,b)-2 S_{5}(b,u,b )-\f12S_{1}(\vec{j},b,b)+S_{2}(b,\vec{j},b)=-\f43D (u,b,\vec{j}),
\ee
where
$$\ba
&S_{3}(u,u,u)=-\lim\limits_{\lambda\rightarrow0}S_{1} ( u,u,u;\lambda)=-
\lim\limits_{\lambda\rightarrow0}\f{1}{\lambda}\int_{\partial B } \ell \cdot\delta u(\lambda\ell)|\delta u(\lambda\ell)|^{2}\f{d\sigma(\ell) }{4\pi},\\
&S_{4}(u,b,b)=-\lim\limits_{\lambda\rightarrow0}S_{2} ( u,b,b;\lambda)=-
\lim\limits_{\lambda\rightarrow0}\f{1}{\lambda}\int_{\partial B } \ell \cdot\delta u (\lambda\ell)|\delta b(\lambda\ell)|^{2}\f{d\sigma(\ell) }{4\pi},
\\
&S_{5}(b,u,b)=-\lim\limits_{\lambda\rightarrow0}S _{3}(b,u,b;\lambda)=-
\lim\limits_{\lambda\rightarrow0}\f{1}{\lambda}\int_{\partial B } \ell \cdot\delta b (\lambda\ell)|\delta u(\lambda\ell)\cdot\delta b(\ell)| \f{d\sigma(\ell) }{4\pi}.
\ea$$
 \end{theorem}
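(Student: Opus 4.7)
\medskip

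\noindent\textbf{Proof proposal.} The plan is to treat the Hall MHD system \eqref{hallMHD} as a sum of two structurally distinct blocks: the standard incompressible MHD block involving the convective terms $u\cdot\nabla u$, $b\cdot\nabla b$, $u\cdot\nabla b$, $b\cdot\nabla u$, and the Hall block $\nabla\times[(\nabla\times b)\times b]$. The convective block will be handled in the spirit of \cite{[DR],[WWY1]} (as already done for the Euler/MHD systems), while the Hall block will be handled exactly as in Theorem \ref{the1.1}, namely by first rewriting the Hall term in the equivalent form $\nabla\times[\text{div}(b\otimes b)]$ so that, after mollification, it is compatible with the dissipation quantities appearing in $\eqref{drdissipationterm}$ and $\eqref{drdissipationterm2}$.

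First, I would mollify both equations of \eqref{hallMHD} by convolution with $\varphi_{\varepsilon}$, obtaining the equations satisfied by $u^{\varepsilon}$ and $b^{\varepsilon}$. Taking the inner product of the first mollified equation with $u$ and of the second mollified equation with $b$, then adding them (or equivalently multiplying the mollified equations by $u^{\varepsilon}$ and $b^{\varepsilon}$ and exploiting the symmetry of convolution), one obtains a local balance law with commutator-type remainders. For the convection terms $u\cdot\nabla u$ and $b\cdot\nabla b$, together with the cross terms $u\cdot\nabla b$ and $b\cdot\nabla u$, the identities of Duchon--Robert together with the MHD variant treated in \cite{[WWY1]} produce the first three integrals in the definition of $D(u,b,\vec{j};\varepsilon)$. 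For the Hall contribution, rewrite $\nabla\times[(\nabla\times b)\times b]=\nabla\times[\text{div}(b\otimes b)]$ (modulo the solenoidal constraint) and apply the argument developed for Theorem \ref{the1.1}; this furnishes the last two integrals of $D(u,b,\vec{j};\varepsilon)$ and, critically, the term $[\text{div}(b\otimes b)]^{\varepsilon}\cdot\vec{j}+[\text{div}(b\otimes b)]\cdot\vec{j}^{\varepsilon}$ matches the dissipation ansatz exactly as explained in the discussion after Theorem \ref{the1.1}.

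Next I would pass to the limit $\varepsilon\to0$. The convergence of the convective part follows from the integrability assumptions $u\in L^{\infty}_tL^{2}_x\cap L^{3}_tL^{3}_x$ and the bound $\frac{2}{p}+\frac{1}{m}=1$, $\frac{2}{q}+\frac{1}{n}=1$ via Hölder's inequality and the standard mollifier estimate $\|f-f^{\varepsilon}\|_{L^{r}}\lesssim\|\nabla\varphi\|_{L^{1}}\cdot(\text{modulus of continuity of }f\text{ in }L^{r})$; the convergence of the Hall block is already justified in the proof of Theorem \ref{the1.1} under \eqref{the1.1c}. This yields the local energy equation asserted in the theorem, with $D(u,b,\vec{j})$ the distributional limit of $D(u,b,\vec{j};\varepsilon)$.

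Finally, for the 4/3 law \eqref{HMHDYaglom4/3law}, I would follow the ``radial symmetrization'' step of \cite{[DR],[WWY1]}: choose $\varphi$ radially symmetric, write $\nabla\varphi_{\varepsilon}(\ell)=\varepsilon^{-4}\varphi'(|\ell|/\varepsilon)\,\ell/|\ell|$, perform the change of variables $\ell=\lambda\varepsilon\omega$ with $\omega\in\partial B$, integrate in the radial direction against $\int_{0}^{\infty}\lambda^{3}\varphi'(\lambda)\,d\lambda=-3$, and identify each of the five volume integrals in $D(u,b,\vec{j};\varepsilon)$ with the corresponding limit $S_{3}, S_{4}, S_{5}, S_{1}, S_{2}$ scaled by $\frac{3}{4}$. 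Summing with the coefficients $-\frac{1}{4},-\frac{1}{4},\frac{1}{2},\frac{1}{8},-\frac{1}{4}$ appearing in $D(u,b,\vec{j};\varepsilon)$ produces exactly \eqref{HMHDYaglom4/3law} after multiplying through by $-\frac{4}{3}$. The main obstacle I anticipate is the handling of the Hall--velocity coupling in the local balance: the cross contributions $u\cdot\nabla b$ and $b\cdot\nabla u$ must cancel cleanly against each other while leaving the Hall piece to be treated exactly as in Theorem \ref{the1.1}; verifying that no residual commutator survives in the limit (so that the dissipation consists of precisely the five stated integrals and nothing more) is the delicate bookkeeping step.
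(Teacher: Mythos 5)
Your proposal follows essentially the same route as the paper: mollify and pair the equations, treat the convective block à la Duchon--Robert/\cite{[WWY1]}, exploit both equivalent forms of the Hall term (the averaging of $\text{div}(b\otimes\vec{j})-\text{div}(\vec{j}\otimes b)$ with $\nabla\times[\text{div}(b\otimes b)]$, exactly as in Theorem \ref{the1.1}) to produce the term $[\text{div}(b\otimes b)]^{\varepsilon}\cdot\vec{j}+[\text{div}(b\otimes b)]\cdot\vec{j}^{\varepsilon}$, and finish with the radial-symmetrization computation giving the factor $-3/(4\pi)$. The only small imprecision is that the cross terms $u\cdot\nabla b$ and $b\cdot\nabla u$ do not cancel but rather combine into the flux $-b(b\cdot u)$ plus the defect terms yielding $S_{4}$ and $S_{5}$, which is exactly how the paper's bookkeeping resolves the concern you flag at the end.
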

 \begin{remark}
 We would like to point out that
  that  the relation   \eqref{HMHDYaglom4/3law} is consistent with  the result  proved in \cite{ [HVLFM],[FGSMA]}.
 \end{remark}
 Besides   total energy conservation,  the smooth solution of the Hall MHD equations
 \eqref{hallMHD} obeys magnetic helicity
 conservation. For the HMHD equations \eqref{hallMHD}, we derive  from the following magnetic vector
potential equations and  $\eqref{VI}_{2}$ that
 \be\label{hmpotentialeq}
A_{t}-u\times b+ (\nabla\times b)\times b   +\nabla \pi=0, \text{div}A=0.
\ee
 There is little literature concerning  investigation of four-thirds law of helicity in  the  Hall
magnetohydrodynamic \eqref{hallMHD}. The final result is stated as follows.
\begin{theorem}\label{the1.5}
 Let
 $b$ be a weak solution of the HMDH equations
 \eqref{hallMHD} and magnetic vector potential $A$  satisfy \eqref{hmpotentialeq}.  Assume that
 \be\label{the1.4c}
\vec{j} \in L^{\infty}(0,T;L^{\f32}(\mathbb{T}^{3})), \ u\in L^{3}(0,T;L^{3}(\mathbb{T}^{3})) \text{and}\ A\in C((0,T)\times\mathbb{T}^{3}). \ee
  Then the function
$$D_{mh}(b,\varepsilon )= -\f12\int_{\mathbb{T}^{3}}\nabla\varphi_{\varepsilon}(\ell)\cdot\delta b(\ell) | \delta b(\ell)|^{2}  d\ell,$$  converges to a distribution $D_{mh}(b)$ in the sense of distributions as $\varepsilon\rightarrow0$, and $D_{mh}(b)$ satisfies the local equation of energy
 $$ \ba
& \partial_{t}(bA )
 + \text{div}([\text{div}(b\otimes b)]\times A )   + \text{div}[ \pi b ]+ \text{div}(b|b|^{2})=D_{mh}(b)
\ea$$
in the sense of distributions.
Moreover, there holds the following $4/3$ law
\be\label{HMHDMHYaglom4/3law}
 S (b,b,b)=-\f43D_{mh}(b,\varepsilon ),
\ee
where $$\ba
&S (b,b,b)=-\lim\limits_{\lambda\rightarrow0}S  ( b,b,b;\lambda)=-
\lim\limits_{\lambda\rightarrow0}\f{1}{\lambda}\int_{\partial B } \ell \cdot\delta b (\lambda\ell)|\delta b(\lambda\ell)|^{2}\f{d\sigma(\ell) }{4\pi}.
 \ea$$
 \end{theorem}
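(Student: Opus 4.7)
The plan is to extend the Duchon--Robert-type argument used for the EMHD case in Theorem \ref{the1.2} to the Hall MHD system, with the key new ingredient being control of the transport contribution from the velocity field $u$ in the vector-potential equation \eqref{hmpotentialeq}. Since the dissipation expression $D_{mh}(b,\varepsilon)$ involves only $b$, my strategy is to show that every $u$-dependent term enters the local balance either as a divergence or as a vanishing commutator, so that no new anomalous dissipation is produced under the regularity \eqref{the1.4c}.

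Concretely, I would first rewrite the induction equation $\eqref{hallMHD}_{2}$ as
\[
b_{t} - \nabla\times(u\times b) + \nabla\times[\text{div}(b\otimes b)] = 0,
\]
using $u\cdot\nabla b - b\cdot\nabla u = -\nabla\times(u\times b)$ together with $(\nabla\times b)\times b = \text{div}(b\otimes b) - \nabla(|b|^{2}/2)$, both valid because $\text{div}\,u = \text{div}\,b = 0$. I would then mollify both this equation and \eqref{hmpotentialeq} by convolution with $\varphi_{\varepsilon}$, pair the mollified $b$-equation against $A$ and the mollified $A$-equation against $b$, sum, and integrate by parts. Three algebraic cancellations are decisive: $(u\times b)\cdot b\equiv 0$, $[(\nabla\times b)\times b]\cdot b\equiv 0$, and (after moving the curl onto $A$) $(\nabla\times A)\cdot(u\times b)=b\cdot(u\times b)\equiv 0$. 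These reduce all $u$-dependent contributions to the divergence $\text{div}[(u\times b)\times A]$ plus a commutator between mollification and multiplication by $u$. The Hall piece $\nabla\times[\text{div}(b\otimes b)]$ produces, through the standard Constantin--E--Titi three-factor identity applied to $\delta b\otimes\delta b\otimes\delta b$, precisely the dissipation $D_{mh}(b,\varepsilon) = -\f12\int\nabla\varphi_{\varepsilon}\cdot\delta b\,|\delta b|^{2}\,d\ell$ of Theorem \ref{the1.2}, together with the divergence terms $\text{div}\{[\text{div}(b\otimes b)]\times A\}$, $\text{div}(\pi b)$ and $\text{div}(b|b|^{2})$.

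Convergence of $D_{mh}(b,\varepsilon)$ to a distribution is secured by $\vec{j}\in L^{\infty}(0,T;L^{3/2})$, which via Biot--Savart inversion yields $b\in L^{\infty}(0,T;L^{3})$; this is exactly the regularity needed to run the cubic-in-$\delta b$ commutator estimate in the Duchon--Robert framework. The same bound, combined with $u\in L^{3}(0,T;L^{3})$, controls the new commutator $(u\times b)^{\varepsilon}\cdot b - (u\times b)\cdot b^{\varepsilon}$, so that it tends to zero in the sense of distributions. The continuity of $A$ serves the secondary role of ensuring that $A\cdot b^{\varepsilon}$ and $A^{\varepsilon}\cdot b$ are well-defined distributions and converge pointwise to $A\cdot b$. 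Once the local balance is established, the $4/3$ law \eqref{HMHDMHYaglom4/3law} follows from the standard isotropic-averaging computation: pass to spherical coordinates $\ell = \lambda\ell'$, integrate by parts against a radial $\varphi$, and use $\int_{\mathbb{R}^{3}}\varphi\,dx = 1$ to extract the explicit numerical factor, exactly as in the proofs of Theorems \ref{the1.1} and \ref{the1.2}.

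The main obstacle is the treatment of the commutator arising from $u\times b$ in \eqref{hmpotentialeq}; this is the sole departure from the EMHD proof. Thanks to $u\in L^{3}(0,T;L^{3})$ and $b\in L^{\infty}(0,T;L^{3})$, this commutator is handled by an argument analogous to the Duchon--Robert control of the Euler convective nonlinearity in the energy class, and it vanishes in the limit $\varepsilon\to 0$. After this step, the remainder of the proof is a routine adaptation of the arguments already used for Theorem \ref{the1.2}, and yields both the local magnetic-helicity balance and the advertised four-thirds relation $S(b,b,b) = -\f43 D_{mh}(b)$.
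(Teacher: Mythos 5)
Your proposal follows essentially the same route as the paper: rewrite the induction equation as $b_t+\nabla\times(b\times u)+\nabla\times[\mathrm{div}(b\otimes b)]=0$, pair the mollified equations with $A$ and $b$, convert the curls to divergences via the identity $\vec{A}\cdot(\nabla\times\vec{B})=\mathrm{div}(\vec{B}\times\vec{A})+\vec{B}\cdot(\nabla\times\vec{A})$, observe that the $u$-contribution reduces to $\mathrm{div}[(u\times b)\times A]$ plus terms converging to $(u\times b)\cdot b\equiv 0$ under $u,b\in L^{3}L^{3}$, and extract the dissipation and the $4/3$ factor by the same isotropic averaging as in Theorems \ref{the1.1} and \ref{the1.2}. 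The only cosmetic differences are your invocation of the Constantin--E--Titi three-factor identity where the paper computes directly with increments, and your writing the vanishing $u$-term as a difference where the paper has a sum (both limits are zero), so the argument is correct and matches the paper's proof.
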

 \begin{remark}
 It seems that the relation \eqref{HMHDMHYaglom4/3law} is the first 4/3 law of   magnetic helicity in    Hall
magnetohydrodynamic equations. The reader may refer to \cite{[BG]} for other exact relations for the magnetic helicity in HMHD equaitons.
 \end{remark}
To end  this section, we introduce some notations which will be used in this paper.
Firstly, for $p\in [1,\,\infty]$, the notation $L^{p}(0,\,T;X)$ stands for the set of measurable functions $f$ on the interval $(0,\,T)$ with values in $X$ and $\|f\|_{X}$ belonging to $L^{p}(0,\,T)$.
Secondly, we will use the standard mollifier kernel, i.e. $\varphi(x)=C_0e^{-\frac{1}{1-|x|^2}}$ for $|x|<1$ and $\varphi(x)=0$ for $|x|\geq 1$, where $C_0$ is a constant such that $\int_{\mathbb{R}^3}\varphi (x) dx=1$. Eventually, for $\varepsilon>0$, we denote the rescaled mollifier by  $\varphi_{\varepsilon}(x)=\frac{1}{\varepsilon^3}\varphi(\frac{x}{\varepsilon})$, and for any function $f\in L^1_{\textrm{loc}}(\mathbb{R}^3)$, its mollified version is defined by
$$
f^\varepsilon(x)=\int_{\mathbb{R}^3}\varphi_{\varepsilon}(x-y)f(y)dy,\ \ x\in \mathbb{R}^3.
$$
The paper is organized as follows.  Section 2 is concerned with exact relation of the energy and the magnetic helicity in the electronic magnetohydrodynamic equations.
In Section 3,  we  establish the  four-thirds laws in the  Hall
magnetohydrodynamic equations.  Finally,   concluding remarks are given in
section 4.
\section{Four-thirds  laws in  electron magnetohydrodynamic system}
This section is devoted to the study 4/3 laws for   the dissipation rates of  energy and magnetic helicity in
electron magnetohydrodynamic  equations \eqref{plasmahydrodynamics}. Before we begin the proof, we recall some vector identities as follows,
\be\ba\label{VI}
&\nabla(\vec{A}\cdot \vec{B})=\vec{A}\cdot\nabla \vec{B}+\vec{B}\cdot\nabla \vec{A}+\vec{A}\times\text{curl}\vec{B}+\vec{B}\times(\nabla\times\vec{A}),\\
& \nabla\times(\vec{A}\times \vec{B})=\vec{A} \text{div} \vec{B}-\vec{B} \text{div}  \vec{A}+\vec{B}\cdot\nabla \vec{A}-\vec{A}\cdot\nabla \vec{B},\\
&\vec{A}\cdot(\nabla\times \vec{B})=\text{div}(\vec{B}\times \vec{A})+\vec{B}\cdot(\nabla\times \vec{A}),
\ea\ee
which will be frequently used in this paper.
Combining this and the divgence-free condition $ \Div \vec{j}=0$, one obtains
 \be\label{identity}\ba
&b\cdot\nabla b = \f{1}{2}\nabla |b|^{2}+\vec{j}\times b,\\
&\nabla\times(\vec{j}\times b)=b\cdot\nabla \vec{j}-\vec{j}\cdot\nabla b,\ea\ee
which turns out that
\begin{align}\label{non1}
 &\nabla\times[(\nabla\times b)\times b]= \nabla\times[\vec{j}\times b]=
 \text{div}(b\otimes\vec{j})- \text{div}(\vec{j}\otimes b),\\
 \label{non2}&\nabla\times[(\nabla\times b)\times b]= \nabla\times[\vec{j}\times b]=\nabla\times[\text{div}(b\otimes b)-\nabla\f12|b|^{2}]=\nabla\times[\text{div}(b\otimes b)].
\end{align}
Hence, we get two equivalent forms of EMDH equation \eqref{plasmahydrodynamics}
\begin{align}\label{eq1}
& b_{t}+  \text{div}(b\otimes\vec{j})- \text{div}(\vec{j}\otimes b)=0,
\\
 \label{eq2}
& b_{t}+\nabla\times[\text{div}(b\otimes b)]=0.
\end{align}
\subsection{Exact relation of energy in the EMHD system}

 \begin{proof}[Proof of Theorem \ref{the1.1} ]
We conclude by mollifying the  equation \eqref{eq1} that
$$
b_{t}^{\varepsilon}+  \text{div}(b\otimes\vec{j})^{\varepsilon}- \text{div}(\vec{j}\otimes b)^{\varepsilon}=0.
$$
After multiplying the above equation by $b$ and the  equation  \eqref{eq1} by $b^{\varepsilon}$, respectively,  we derive from  summing them  together that
\be\label{2.7}
 \partial_{t}(bb^{\varepsilon})+ \text{div}(b\otimes\vec{j})^{\varepsilon} b+ \text{div}(b\otimes\vec{j})b^{\varepsilon}- \text{div}(\vec{j}\otimes b)^{\varepsilon}b- \text{div}(\vec{j}\otimes b)b^{\varepsilon}=0.
\ee
Likewise,
\be\label{2.8}
 \partial_{t}(bb^{\varepsilon})+b\cdot\{\nabla\times[\text{div}(b\otimes b)]^{\varepsilon}\}+b^{\varepsilon}\cdot\{\nabla\times[\text{div}(b\otimes b)]\}=0.
\ee
With the help of  identity $\eqref{VI}_{3}$, we know that
$$\ba
b\cdot\{\nabla\times[\text{div}(b\otimes b)]^{\varepsilon}\}=&
\text{div}([\text{div}(b\otimes b)]^{\varepsilon}\times b)+[\text{div}(b\otimes b)]^{\varepsilon}\cdot(\nabla\times b)\\=&
\text{div}([\text{div}(b\otimes b)]^{\varepsilon}\times b)+[\text{div}(b\otimes b)]^{\varepsilon}\cdot \vec{j}
\ea$$
and
$$b^{\varepsilon}\cdot\{\nabla\times[\text{div}(b\otimes b)]\}=\text{div}([\text{div}(b\otimes b)]\times b^{\varepsilon})+[\text{div}(b\otimes b)] \cdot \vec{j}^{\varepsilon}.$$
Inserting the latter two equations into \eqref{2.8}, we know that
\be\ba \label{2.9}
 &\partial_{t}(bb^{\varepsilon})+\text{div}([\text{div}(b\otimes b)]^{\varepsilon}\times b)+\text{div}([\text{div}(b\otimes b)]^{\varepsilon}\times b)\\&+[\text{div}(b\otimes b)]^{\varepsilon}\cdot \vec{j}+[\text{div}(b\otimes b)]\cdot \vec{j}^{\varepsilon}=0.
\ea\ee
Putting \eqref{2.7} and \eqref{2.9} together, we arrive at
 \be\ba \label{Hkey1}
\f12& \partial_{t}(bb^{\varepsilon})+\f14\{\text{div}([\text{div}(b\otimes b)]^{\varepsilon}\times b)+\text{div}([\text{div}(b\otimes b)]\times b^{\varepsilon})\\&+[\text{div}(b\otimes b)]^{\varepsilon}\cdot \vec{j}+[\text{div}(b\otimes b)]\cdot \vec{j}^{\varepsilon}+ \text{div}(b\otimes\vec{j})^{\varepsilon} b+ \text{div}(b\otimes\vec{j})b^{\varepsilon}\\&- \text{div}(\vec{j}\otimes b)^{\varepsilon}\cdot b- \text{div}(\vec{j}\otimes b)^{\varepsilon}\cdot b\}=0.
\ea\ee
It is easy to check that
$$\ba
\partial_{k}(\vec{j}_{k}b_{i})^{\varepsilon}b_{i}+\partial_{k}(\vec{j}_{k}b_{i})b_{i}^{\varepsilon}
=&\partial_{k}(\vec{j}_{k}b_{i}b_{i}^{\varepsilon})+
\partial_{k}(\vec{j}_{k}b_{i})^{\varepsilon}b_{i}-
(\vec{j}_{k}b_{i})\partial_{k}b_{i}^{\varepsilon}\\
=&\text{div}(\vec{j} b\cdot b^{\varepsilon})+\partial_{k}(\vec{j}_{k}b_{i})^{\varepsilon}b_{i}
-(\vec{j}_{k}b_{i})\partial_{k}b_{i}^{\varepsilon},
\ea$$
which means that
\be\label{nonm1} \text{div}(\vec{j}\otimes b)^{\varepsilon}\cdot b+\text{div}(\vec{j}\otimes b)\cdot b^{\varepsilon}=\text{div}(\vec{j} b\cdot b^{\varepsilon})+\partial_{k}(\vec{j}_{k}b_{i})^{\varepsilon}b_{i}
-(\vec{j}_{k}b_{i})\partial_{k}b_{i}^{\varepsilon}.
\ee
A straightforward computation yields that
\be\ba\label{2.12}
&\partial_{k}(b_{k}b_{i})^{\varepsilon}\vec{j}_{i}+\partial_{k}(b_{k}b_{i}) j_{i}^{\varepsilon}=\partial_{k}(b_{k}b_{i}j_{i}^{\varepsilon}) +\partial_{k}(b_{k}b_{i})^{\varepsilon}\vec{j}_{i}-(b_{k}b_{i}) \partial_{k}j_{i}^{\varepsilon},\\
&\partial_{k}(b_{k}\vec{j}_{i})^{\varepsilon}b_{i}+\partial_{k}(b_{k}j_{i}) b _{i}^{\varepsilon}=
\partial_{k}(b_{k}j_{i}b _{i}^{\varepsilon})+\partial_{k}(b_{k}\vec{j}_{i})^{\varepsilon}b_{i}-(b_{k}j_{i}) \partial_{k}b _{i}^{\varepsilon}.
\ea\ee
Notice that
\be\ba\label{2.13}
&[\text{div}(b\otimes b)]^{\varepsilon}\cdot \vec{j}+[\text{div}(b\otimes b)]^{\varepsilon}\cdot \vec{j}+ \text{div}(b\otimes\vec{j})^{\varepsilon} b+ \text{div}(b\otimes\vec{j})b^{\varepsilon}\\
=&
\partial_{k}(b_{k}b_{i})^{\varepsilon}\vec{j}_{i}+\partial_{k}(b_{k}b_{i}) j_{i}^{\varepsilon}+
\partial_{k}(b_{k}\vec{j}_{i})^{\varepsilon}b_{i}+\partial_{k}(b_{k}j_{i}) b _{i}^{\varepsilon}.
\ea\ee
Inserting \eqref{nonm1} into \eqref{2.13}, we write
\be\ba\label{2.14}
&[\text{div}(b\otimes b)]^{\varepsilon}\cdot \vec{j}+[\text{div}(b\otimes b)]^{\varepsilon}\cdot \vec{j}+ \text{div}(b\otimes\vec{j})^{\varepsilon} b+ \text{div}(b\otimes\vec{j})b^{\varepsilon}\\
=&\text{div}[b (b\cdot j^{\varepsilon})] +\partial_{k}(b_{k}b_{i})^{\varepsilon}\vec{j}_{i}-(b_{k}b_{i}) \partial_{k}j_{i}^{\varepsilon}
+\text{div}[b (b^{\varepsilon}\cdot j)] +\partial_{k}(b_{k}b_{i})^{\varepsilon}\vec{j}_{i}-(b_{k}b_{i}) \partial_{k}j_{i}^{\varepsilon}\\
=&\text{div}[b (b\cdot j^{\varepsilon})] +\text{div}[b (b^{\varepsilon}\cdot j)] +\partial_{k}(b_{k}b_{i})^{\varepsilon}\vec{j}_{i}-(b_{k}b_{i}) \partial_{k}j_{i}^{\varepsilon} +\partial_{k}(b_{k}\vec{j}_{i})^{\varepsilon}b_{i}-(b_{k}j_{i}) \partial_{k}b _{i}^{\varepsilon}
\ea\ee
Plugging \eqref{nonm1}  and \eqref{2.14} into \eqref{Hkey1}, we have
\be\ba\label{2.15}
\f12& \partial_{t}(bb^{\varepsilon})+\f14\text{div}([\text{div}(b\otimes b)]^{\varepsilon}\times b)+\f14\text{div}([\text{div}(b\otimes b)]\times b^{\varepsilon})\\&+\f14\text{div}[b (b\cdot j^{\varepsilon})] +\f14\text{div}[b (b^{\varepsilon}\cdot j)]- \f14\text{div}(\vec{j} b\cdot b^{\varepsilon})\\&=\f14[\partial_{k}(\vec{j}_{k}b_{i})^{\varepsilon}b_{i}
-(\vec{j}_{k}b_{i})\partial_{k}b_{i}^{\varepsilon}]\\&-\f14[\partial_{k}(b_{k}b_{i})^{\varepsilon}\vec{j}_{i}-(b_{k}b_{i}) \partial_{k}j_{i}^{\varepsilon} +\partial_{k}(b_{k}\vec{j}_{i})^{\varepsilon}b_{i}-(b_{k}j_{i}) \partial_{k}b _{i}^{\varepsilon}].
\ea\ee
Before going further, we set
$$\delta \vec{j}_{i}(\ell)=\vec{j}_{i}(x+\ell)-\vec{j}_{i}(x)=J_{i}-\vec{j}_{i} ~ \text{and}  ~\delta  b_{i}(\ell)=b_{i}(x+\ell)-b_{i}(x)=B_{i} -b_{i}. $$
We notice that
$$\ba
&\int_{\mathbb{T}^{3}}\nabla\varphi_{\varepsilon}(\ell)\cdot\delta \vec{j}(\ell)|\delta  b(\ell)\cdot\delta  b(\ell)| d\ell\\
=&\int_{\mathbb{T}^{3}} \partial_{l_{k}}\varphi_{\varepsilon}(\ell)[\vec{J}_{k}(x+\ell)-\vec{j}_{k}(x)]
[B_{i}(x+\ell)-b_{i}(x)]^{2} d\ell\\=&\int_{\mathbb{T}^{3}} \partial_{l_{k}}\varphi_{\varepsilon}(\ell)[\vec{J}_{k}  B_{i} ^{2}
-2\vec{J}_{k}B_{i} b_{i} +\vec{J}_{k}b^{2}_{i}-\vec{j}_{k}b_{i}^{2}-\vec{j}_{k}B_{i}^{2}+2\vec{j}_{k}B_{i}b_{i}^{2}]  d\ell.
\ea$$
In view of  changing variables,  we deduce that
$$\ba
\int_{\mathbb{T}^{3}} \partial_{l_{k}}\varphi_{\varepsilon}(\ell) J_{k} B_{i}^{2}
   d\ell=&\int_{\mathbb{T}^{3}} \partial_{l_{k}}\varphi_{\varepsilon}(\ell) \vec{j}_{k}(x+\ell)
  b_{i}^{2}(x+\ell) d\ell\\
   =&\int_{\mathbb{T}^{3}} \partial_{\eta_{k}}\varphi_{\varepsilon}(\eta-x) \vec{j}_{k}(\eta)
  b_{i}^{2}(\eta) d\eta\\
   =&-\int_{\mathbb{T}^{3}} \partial_{x_{k}}\varphi_{\varepsilon}(\eta-x) \vec{j}_{k}(\eta)
b_{i}^{2}(\eta) d\eta\\
   =&-\partial_{k}(\vec{j}_{k} b_{i}^{2}\ast\varphi_{\varepsilon} )\\
   =&-\partial_{k}(\vec{j}_{k}b_{i}^{2} )^{\varepsilon}.
\ea$$
Arguing in the same manner as in the above derivation, we discover that
$$\ba
&  \int_{\mathbb{T}^{3}} \partial_{l_{k}}\varphi_{\varepsilon}(\ell)[\vec{J}_{k}  B_{i} ^{2}
-2\vec{J}_{k}B_{i} b_{i} +\vec{J}_{k}b^{2}_{i}-\vec{j}_{k}B_{i}^{2}+2\vec{j}_{k}B_{i}b_{i}^{2}-\vec{j}_{k}b_{i}^{2}]  d\ell\\
=&-\partial_{k}(\vec{j}_{k} b_{i}^{2} )^{\varepsilon}+2\partial_{k}(\vec{j}_{k} \theta  )^{\varepsilon} b_{i}-\partial_{k}\vec{j}_{k}^{\varepsilon} b_{i}^{2}+\vec{j}_{k}\partial_{k}( b_{i}^{2} )^{\varepsilon}-2\vec{j}_{k}\partial_{k}  b_{i} ^{\varepsilon}b_{i}-\vec{j}_{k}b_{i}^{2}\int_{\mathbb{T}^{3}} \partial_{l_{k}}\varphi_{\varepsilon}(\ell)   d\ell\\
=&\partial_{k}\B(\vec{j}_{k}( b_{i}^{2} )^{\varepsilon}-(\vec{j}_{k}b_{i}^{2} )^{\varepsilon}\B)+2\partial_{k}(\vec{j}_{k} b_{i}  )^{\varepsilon} b_{i}-2\vec{j}_{k}\partial_{k}  b_{i} ^{\varepsilon}b_{i},
\ea$$
which follows from that
\be\ba\label{2.26}
&\int_{\mathbb{T}^{3}}\nabla\varphi_{\varepsilon}(\ell)\cdot\delta \vec{j}(\ell)|\delta  b(\ell)\cdot\delta  b(\ell)| d\ell
= \partial_{k}\B(\vec{j}_{k}( b_{i}^{2} )^{\varepsilon}-(\vec{j}_{k}b_{i}^{2} )^{\varepsilon}\B)+2\partial_{k}(\vec{j}_{k} b_{i}  )^{\varepsilon} b_{i}-2\vec{j}_{k}\partial_{k}  b_{i} ^{\varepsilon}b_{i}.\ea\ee
Repeating the above deduction process, we derive from the divergence-free conditions $\Div v=0$ and $\Div b=0$,that
\be\label{2.17}\ba
&\int_{\mathbb{T}^{3}}\nabla\varphi_{\varepsilon}(\ell)\cdot\delta b(\ell) | \delta  j(\ell)\cdot\delta b(\ell)|  d\ell\\
=&-\partial_{k}(b_{k}\vec{j}_{i}b_{i})^{\varepsilon}
+\partial_{k}(b_{k}\vec{j}_{i})^{\varepsilon}b_{i}
+\partial_{k}(b_{k}b_{i})^{\varepsilon}\vec{j}_{i}-\partial_{k}b_{k}^{\varepsilon}b_{i}\vec{j}_{i}
+b_{k}\partial_{k}(\vec{j}_{i}b_{i})^{\varepsilon}-b_{k}b_{i}\partial_{k}\vec{j}_{i}^{\varepsilon}
-b_{k}\vec{j}_{i}\partial_{k}b_{i}^{\varepsilon}\\
=&-\partial_{k}(b_{k}\vec{j}_{i}b_{i})^{\varepsilon}
+\partial_{k}(b_{k}\vec{j}_{i})^{\varepsilon}b_{i}
+\partial_{k}(b_{k}b_{i})^{\varepsilon}\vec{j}_{i}
+b_{k}\partial_{k}(\vec{j}_{i}b_{i})^{\varepsilon}-b_{k}b_{i}\partial_{k}\vec{j}_{i}^{\varepsilon}
-b_{k}\vec{j}_{i}\partial_{k}b_{i}^{\varepsilon}\\
=&\partial_{k}\B(b_{k} (\vec{j}_{i}v_{i})^{\varepsilon}-(b_{k}\vec{j}_{i}b_{i})^{\varepsilon}\B)
+\partial_{k}(b_{k}b_{i})^{\varepsilon}\vec{j}_{i}
-b_{k}b_{i}\partial_{k}\vec{j}_{i}^{\varepsilon}
 +\partial_{k}(b_{k}\vec{j}_{i})^{\varepsilon}b_{i}
-b_{k}\vec{j}_{i}\partial_{k}b_{i}^{\varepsilon}.\ea\ee
The condition \eqref{the1.1c} ensures that the first term on the right hand side of both \eqref{2.26} and  \eqref{2.17}  converges to $0$ in the sense of distributions on $(0,T)\times\mathbb{T}^{3}$ as $\varepsilon\rightarrow0$. Consequently, the limit of
$$D(v,\vec{j};\varepsilon)= \f18\int_{\mathbb{T}^{3}}\nabla\varphi_{\varepsilon}(\ell)\cdot\delta\vec{ j}(\ell) |\delta  b(\ell)|^{2} d\ell-\f14\int_{\mathbb{T}^{3}}\nabla\varphi_{\varepsilon}(\ell)\cdot\delta b(\ell) | \delta  j(\ell)\cdot\delta b(\ell)|  d\ell,$$
is the same  as
$$\f14[\partial_{k}(\vec{j}_{k}b_{i})^{\varepsilon}b_{i}
-(\vec{j}_{k}b_{i})\partial_{k}b_{i}^{\varepsilon}] -\f14[\partial_{k}(b_{k}b_{i})^{\varepsilon}\vec{j}_{i}-(b_{k}b_{i}) \partial_{k}j_{i}^{\varepsilon} +\partial_{k}(b_{k}\vec{j}_{i})^{\varepsilon}b_{i}-(b_{k}j_{i}) \partial_{k}b _{i}^{\varepsilon}].$$
It remains to pass to the limit of terms on the left hand side of   \eqref{2.15}. Indeed, making use of \eqref{the1.1c} again, we know that
$ \f14\text{div}[b (b\cdot j^{\varepsilon})] +\f14\text{div}[b (b^{\varepsilon}\cdot j)]- \f14\text{div}(\vec{j} b\cdot b^{\varepsilon})$ tends to $ \f12\text{div}[b (b\cdot j )] - \f14\text{div}(\vec{j} b\cdot b )$    in the sense of distributions on $(0,T)\times\mathbb{T}^{3}$ as $\varepsilon\rightarrow0$. In view of the well-known Biot-Savart law, we deduce from $ \vec{j}\in L^{m}(0,T;L^{n}(\mathbb{T}^{3}))$ that $ \nabla b\in L^{m}(0,T;L^{n}(\mathbb{T}^{3}))$.  Therefore, we assert that $\f14\text{div}([\text{div}(b\otimes b)]^{\varepsilon}\times b)+\f14\text{div}([\text{div}(b\otimes b)]\times b^{\varepsilon})$
 converges to  $\f14\text{div}([\text{div}(b\otimes b)]^{\varepsilon}\times b)$ in the sense of distributions on $(0,T)\times\mathbb{T}^{3}$ as $\varepsilon\rightarrow0. $
 As consequence,
 the proof of the first part of of Theorem \ref{the1.1} is completed.
 The rest part is  devoted to  establishing \eqref{EMHDYaglom4/3law}.
Taking advantage of  the polar coordinates and changing variables several times, we end up with
\be\ba\label{key2}
&D(b,\vec{j};\varepsilon)\\=& \f18\int_{\mathbb{T}^{3}}\nabla\varphi_{\varepsilon}(\ell)\cdot\delta\vec{ j}(\ell) |\delta  b(\ell)|^{2} d\ell-\f14\int_{\mathbb{T}^{3}}\nabla\varphi_{\varepsilon}(\ell)\cdot\delta b(\ell) | \delta  \vec{j}(\ell)\cdot\delta b(\ell)|  d\ell\\
 =& \f18\int_{0}^{\infty}\int_{\partial B  } \f{r^{2}}{\varepsilon }\varphi'( |\zeta r| )\f{\zeta}{|\zeta|}\cdot[\vec{ j}(x+ \zeta r\varepsilon)-\vec{ j}(x)]
[b(x+\zeta r\varepsilon)-b(x)]^{2}d\sigma(\zeta)dr\\
&-\f14\int_{0}^{\infty}\int_{\partial B  } \f{r^{2}}{\varepsilon }\varphi'( |\zeta r| )\f{\zeta}{|\zeta|}\cdot[b(x+ \zeta r\varepsilon)-b(x)]
[(\vec{j}  (x+\zeta r\varepsilon)-\vec{j}(x))(b(x+\zeta r\varepsilon)-b(x))]d\sigma(\zeta)dr\\
=&\f12\pi\int_{0}^{\infty}r^{3}\varphi'( r)dr\int_{\partial B }\f{\zeta \cdot[\vec{j}(x+ \zeta r\varepsilon)-\vec{j}(x)]
[  b(x+\zeta r\varepsilon)-b(x) ]^{2} \f{d\sigma(\zeta)}{4\pi}}{r\varepsilon }\\
&-\pi\int_{0}^{\infty}r^{3}\varphi'( r)dr\int_{\partial B }\f{\zeta \cdot[b(x+ \zeta r\varepsilon)-b(x)]
[(\vec{j}(x+\zeta r\varepsilon)-\vec{j}(x))(b(x+\zeta r\varepsilon)-b(x))] \f{d\sigma(\zeta)}{4\pi}}{r\varepsilon }.
\ea\ee
It follows from
  integration by parts that
\be\ba\label{key3}
\int_{0}^{\infty}r^{3}\varphi'( r)dr= -3\int_{0}^{\infty}r^{2}\varphi ( r)dr
= -\f{3}{4\pi}\int_{\mathbb{R}^{3}} \varphi ( \ell)d\ell
= -\f{3}{4\pi}.
\ea\ee
Substituting \eqref{key3} into \eqref{key2}, one has
$$\ba
&D(v,\vec{j} )\\=&\lim_{\varepsilon\rightarrow0}D(v,\vec{j};\varepsilon)\\ =&\f\pi2\pi\int_{0}^{\infty}r^{3}\varphi'( r)dr\lim_{\varepsilon\rightarrow0}\int_{\partial B }\f{\zeta \cdot[\vec{j}(x+ \zeta r\varepsilon)-\vec{j}(x)]
[b(x+\zeta r\varepsilon)-b(x)]^{2}\f{d\sigma(\zeta)}{4\pi}}{r\varepsilon }\\&-\pi\int_{0}^{\infty}r^{3}\varphi'( r)dr\lim_{\varepsilon\rightarrow0}\int_{\partial B }\f{\zeta \cdot[b(x+ \zeta r\varepsilon)-b(x)]
[(\vec{j}(x+\zeta r\varepsilon)-\vec{j}(x))(b(x+\zeta r\varepsilon)-b(x))] \f{d\sigma(\zeta)}{4\pi}}{r\varepsilon }\\
=&\f38S_{1}(\vec{j},b,b)-\f{3}{4} S_{2}(b,\vec{j},b).
\ea$$
Thus, we conclude the Yaglom type relation \eqref{EMHDYaglom4/3law}.
 \end{proof}
 We will provided a slightly different approach to \eqref{2.15} as follows.
 \begin{proof}[Alternative proof of \eqref{2.15}]
It is clear that
$$\ba &\nabla\times[(\nabla\times b)\times b]\cdot b^{\varepsilon}+ \nabla\times[(\nabla\times b)\times b]^{\varepsilon}\cdot b\\
= &\f12\{2 \nabla\times[(\nabla\times b)\times b]\cdot b^{\varepsilon}+2 \nabla\times[(\nabla\times b)\times b]^{\varepsilon}\cdot b\}.
\ea$$
Thanks to \eqref{non1} and \eqref{non2}, we observe that
$$\ba& 2 \nabla\times[(\nabla\times b)\times b]\cdot b^{\varepsilon}]\\=&[\text{div}(b\otimes\vec{j})- \text{div}(\vec{j}\otimes b)]\cdot b^{\varepsilon}+\nabla\times[\text{div}(b\otimes b)]\cdot b^{\varepsilon}\\=&
[\text{div}(b\otimes\vec{j})- \text{div}(\vec{j}\otimes b)]\cdot b^{\varepsilon}+\text{div}([\text{div}(b\otimes b)]\times b^{\varepsilon})+[\text{div}(b\otimes b)]\cdot \vec{j}^{\varepsilon}.
\ea$$
and
$$\ba
&2 \nabla\times[(\nabla\times b)\times b]^{\varepsilon}\cdot b\\=&[\text{div}(b\otimes\vec{j})- \text{div}(\vec{j}\otimes b)]^{\varepsilon}\cdot b+\nabla\times[\text{div}(b\otimes b)]^{\varepsilon}\cdot b\\
=&[\text{div}(b\otimes\vec{j})- \text{div}(\vec{j}\otimes b)]^{\varepsilon}\cdot b+\text{div}([\text{div}(b\otimes b)]^{\varepsilon}\times b)+[\text{div}(b\otimes b)]^{\varepsilon} \cdot \vec{j}.
\ea$$
As a consequence, we get
\be\ba\label{key0}
&\nabla\times[(\nabla\times b)\times b]\cdot b^{\varepsilon}+ \nabla\times[(\nabla\times b)\times b]^{\varepsilon}\cdot b\\
=&\f12\{\text{div}([\text{div}(b\otimes b)]^{\varepsilon}\times b)+\text{div}([\text{div}(b\otimes b)]\times b^{\varepsilon})\}\\&+\f12\{\text{div}(b\otimes\vec{j})\cdot b^{\varepsilon}+[\text{div}(b\otimes b)]\cdot \vec{j}^{\varepsilon}+\text{div}(b\otimes\vec{j})^{\varepsilon}\cdot b+[\text{div}(b\otimes b)]^{\varepsilon} \cdot \vec{j}\\&- \text{div}(\vec{j}\otimes b)^{\varepsilon}\cdot b- \text{div}(\vec{j}\otimes b)\cdot b^{\varepsilon}\}.
\ea\ee
By means of this and  $\partial_{t}(bb^{\varepsilon})+\nabla\times[(\nabla\times b)\times b]\cdot b^{\varepsilon}+ \nabla\times[(\nabla\times b)\times b]^{\varepsilon}\cdot b=0$, one immediately gets \eqref{2.15}. Though this method may be easy, we  actually  obtain the first proof at the earliest when preparing this manuscript.
\end{proof}

We invoke the the dissipation term  \eqref{drdissipationterm} in Theorem  \ref{the1.1} to get  new energy conservation criterion of the EMHD equtions.
\begin{proof}[Proof of Corollary \ref{coro1.6}]
It follows from the H\"older inequality  that
$$
\int_{\mathbb{T}^{3}} |D_{\varepsilon}( v,\omega)|dx\leq \int_{\mathbb{T}^{3}}|\nabla\varphi_{\varepsilon}(\ell)|d\ell\B(\int_{\mathbb{T}^{3}}|\delta b(\ell)|^{\f{9}{2}}dx\B)^{\f49}\B(\int_{\mathbb{T}^{3}} |\delta \vec{j}(\ell)|^{\f{9}{5}} dx\B)^{\f{5}{9}}.
$$
In the light  of $D(b,\vec{j};\varepsilon)$ in  \eqref{drdissipationterm}, we get
 $$\ba
\int_{\mathbb{T}^{3}} |D(b,\vec{j};\varepsilon)|dx\leq & \int_{\mathbb{T}^{3}}|\nabla\varphi_{\varepsilon}(\ell)|d\ell\B(\int_{\mathbb{T}^{3}}|\delta b(\ell)|^{\f{9}{2}}dx\B)^{\f49}\B(\int_{\mathbb{T}^{3}} |\delta \vec{j}(\ell)|^{\f{9}{5}} dx\B)^{\f{5}{9}}\\
\leq&   \int_{\mathbb{T}^{3}}|\nabla\varphi_{\varepsilon}(\ell)| C(t)^{\f{2}{r_{1}}+\f{1}{r_{2}}}|\ell|^{2\alpha+\beta}\sigma(\ell)d\ell.
\ea$$
We conclude by performing a time integration and changing variable that
$$\ba
 \int_{0}^{T}\int_{\mathbb{T}^{3}} |D_{\varepsilon}( v,\omega)|dxdt \leq&
\int_{0}^{T}C(t)^{\f{2}{r_{1}}+\f{1}{r_{2}}}dt\int_{\mathbb{T}^{3}}
|\nabla\varphi_{\varepsilon}(\ell)|
|\ell|^{2\alpha+\beta}\sigma (\ell)d\ell\\
\leq&C \varepsilon^{2\alpha+\beta-1}\int_{|\xi|<1}|\nabla\varphi (\xi)
||\xi|^{2\alpha+\beta}\sigma (\varepsilon\xi)d\xi.
\ea$$
This leads to the desired result.
\end{proof}
\subsection{Exact relation of Magnetic helicity in the EMHD equations}
 \begin{proof}[Proof of Theorem \ref{the1.2}]
 With the help of \eqref{identity}, we  rewrite \eqref{magneticpotentialeq} as
$$ A_{t}+ (\nabla\times b)\times b   +\nabla \pi=A_{t}+\vec{j}\times b   +\nabla \pi= A_{t}+\text{div}(b\otimes b)  +\nabla (-\f{1}{2}|b|^{2}+\pi)=0.$$
Abusing notation slightly, we obtain
\be\label{2.20}
A_{t}+\text{div}(b\otimes b)  +\nabla  \pi=0.\ee
According to \eqref{2.20} and \eqref{eq2}, we know that
 \be\ba\label{2.21}
&A_{t}^{\varepsilon}b+A_{t}b^{\varepsilon}+ b_{t}^{\varepsilon}A+ b_{t}A^{\varepsilon}+\text{div}(b\otimes b)^{\varepsilon}b+\text{div}(b\otimes b)b^{\varepsilon}  +\nabla \pi^{\varepsilon}b\\&+\nabla \pi b^{\varepsilon}+\nabla\times[\text{div}(b\otimes b)]^{\varepsilon}A+\nabla\times[\text{div}(b\otimes b)]A^{\varepsilon}=0.
\ea\ee
From $ \eqref{VI}_{3}$, one arrives at
\be\label{h1}\ba
A\cdot\{  \nabla\times[\text{div}(b\otimes b)]^{\varepsilon}\}=&\text{div}([\text{div}(b\otimes b)]^{\varepsilon}\times A)+[\text{div}(b\otimes b)]^{\varepsilon}\cdot(\nabla\times  A )
\\=&\text{div}([\text{div}(b\otimes b)]^{\varepsilon}\times A)+[\text{div}(b\otimes b)]^{\varepsilon}\cdot b
\ea\ee
and
\be\label{h2}\ba
A^{\varepsilon}\cdot\{  \nabla\times[\text{div}(b\otimes b)]\}=&\text{div}([\text{div}(b\otimes b)]\times A^{\varepsilon})+[\text{div}(b\otimes b)]\cdot(\nabla\times A^{\varepsilon})
\\=&\text{div}([\text{div}(b\otimes b)]\times A^{\varepsilon})+[\text{div}(b\otimes b)]\cdot b^{\varepsilon}.
\ea\ee
Substituting this into \eqref{2.21}, we further deduce that
$$\ba
&b_{t}^{\varepsilon}A+ b_{t}A^{\varepsilon}+
A_{t}^{\varepsilon}b+A_{t}b^{\varepsilon}+ \text{div}([\text{div}(b\otimes b)]^{\varepsilon}\times A)
 +\text{div}([\text{div}(b\otimes b)]\times A^{\varepsilon}) +\\&  +\text{div}[ \pi^{\varepsilon}b+ \pi b^{\varepsilon}]=-2[ \text{div}(b\otimes b)^{\varepsilon}b+\text{div}(b\otimes b)b^{\varepsilon}].
\ea$$
An easy computation leads to that
\be\ba\label{h3}
\text{div}(b\otimes b)^{\varepsilon}b+\text{div}(b\otimes b)b^{\varepsilon}=&\partial_{k}(b_{k}  b_{i})^{\varepsilon}b_{i}+\partial_{k}(b_{k}  b_{i})b_{i}^{\varepsilon}\\=&
\partial_{k}(b_{k}  b_{i}b_{i}^{\varepsilon})+\partial_{k}(b_{k}  b_{i})^{\varepsilon}b_{i}-(b_{k}  b_{i})\partial_{k}b_{i}^{\varepsilon},
\ea\ee
which helps us to get
 $$\ba
&\f{(b^{\varepsilon}A)_{t}+ (bA^{\varepsilon})_{t}}{2} + \f12 \text{div}([\text{div}(b\otimes b)]^{\varepsilon}\times A)
 +\f12\text{div}([\text{div}(b\otimes b)]\times A^{\varepsilon}) \\&  +\f12\text{div}[ \Pi^{\varepsilon}b+ \Pi b^{\varepsilon}]+\f12 \partial_{k}(b_{k}  b_{i}b_{i}^{\varepsilon})=-[\partial_{k}(b_{k}  b_{i})^{\varepsilon}b_{i}-(b_{k}  b_{i})\partial_{k}
 b_{i}^{\varepsilon}].
\ea$$
Next, we show that we can pass to the limit in the above equation.
Indeed, the Sobolev embedding together with \eqref{the1.2c}    guarantee that $b\in L^{\infty}(0,T;L^{3}(\mathbb{T}^{3}))$.
The pressure equation in EMHD  \eqref{magneticpotentialeq}  is   determined   by
  $$
-\Delta\pi=\text{div div}(b\otimes b),
$$
  which means that
$$\ba
\|\pi\|_{L^{\f{3}{2}}(0,T;L^{\f{3}{2}}(\mathbb{T}^{3}))} \leq& C
\|b\|^{2}_{L^{3}(0,T;L^{3}(\mathbb{T}^{3}))}.
\ea$$
With this in hand, we are in a position to repeat the previous argument to prove the first part of this theorem. It is enough to get \eqref{EMHDMHYaglom4/3law}.
Following the path of \eqref{2.26}, we conclude that
\be\label{2.22}\ba
\int_{\mathbb{T}^{3}}\nabla\varphi_{\varepsilon}(\ell)\cdot\delta b(\ell)|\delta  b(\ell)|^{2}d\ell
=& \partial_{k}\B(b_{k}(  b_{i}^{2} )^{\varepsilon}-(b_{k} b_{i}^{2} )^{\varepsilon}\B)+2\partial_{k}(b_{k} b_{i}  )^{\varepsilon} b_{i}   -2b_{k}\partial_{k}  b_{i} ^{\varepsilon}b_{i}.
\ea\ee
 The derivation in \eqref{key2} and \eqref{key3}
entail that
$$\ba
D_{mh}(b,\varepsilon )=&\lim_{\varepsilon\rightarrow0}D_{\varepsilon}(\theta,v)\\
 =&-2\pi\int_{0}^{\infty}r^{3}\varphi'( r)dr\lim_{\varepsilon\rightarrow0}\int_{\partial B }\f{\zeta \cdot[b(x+ \zeta r\varepsilon)-b(x)]
[b(x+\zeta r\varepsilon)-b(x)]^{2}\f{d\sigma(\zeta)}{4\pi}}{r\varepsilon }\\
=&-\f{3}{2}S(b,b,b),
\ea$$
where the definition of $S(b,b,b)$ was used.

This  achieves the proof of this theorem.
 \end{proof}

\section{Four-thirds law in   Hall
magnetohydrodynamic equations}
Two $4/3$ laws   for the dissipation rates of energy and  magnetic helicity in the Hall
magnetohydrodynamic equations are established in this section.

\subsection{Exact relationship of energy in HMHD equations}
\begin{proof}[Proof of Theorem \ref{the1.4}]
 Similar to the derivation of \eqref{2.7}, we derive from \eqref{hallMHD} that
\be\ba\label{3.1}
&(u_{i}^{\varepsilon}u_{i})_{t} +(b_{i}b_{i}^{\varepsilon})_{t}
+\partial_{k}(u_{k}  u_{i})^{\varepsilon}u_{i}+\partial_{k}(u_{k}  u_{i})u_{i}^{\varepsilon}
-\partial_{k}(b_{k}  b_{i})^{\varepsilon}u_{i}-\partial_{k}(b_{k}  b_{i})u_{i}^{\varepsilon}
\\&+ \partial_{i}\Pi ^{\varepsilon}u_{i}+ \partial_{i}\Pi u_{i}^{\varepsilon}
+\partial_{k}(u_{k}  b_{i})^{\varepsilon}b_{i}  +\partial_{k}( u_{k}  b_{i})b_{i}^{\varepsilon}  -\partial_{k}(b_{k}  u_{i})^{\varepsilon}b_{i}-\partial_{k}(b_{k}  u_{i})b_{i}^{\varepsilon}\\&+[ \nabla\times[(\nabla\times b)\times b]\cdot b^{\varepsilon}+ \nabla\times[(\nabla\times b)\times b]^{\varepsilon}\cdot b]=0.
\ea\ee
After a few computations, we have
\begin{align}
&\partial_{k}(u_{k}  u_{i})^{\varepsilon}u_{i}+\partial_{k}(u_{k}  u_{i})u_{i}^{\varepsilon} = \partial_{k}(u_{k}  u_{i}u_{i}^{\varepsilon})+\partial_{k}(u_{k}  u_{i})^{\varepsilon}u_{i}-
 u_{k}  u_{i}\partial_{k}u_{i}^{\varepsilon}\nonumber \\
 &\partial_{k}(b_{k}  b_{i})^{\varepsilon}u_{i}+\partial_{k}(b_{k}  b_{i})u_{i}^{\varepsilon}=\partial_{k}(b_{k}  b_{i}u_{i}^{\varepsilon})+\partial_{k}(b_{k}  b_{i})^{\varepsilon}u_{i}-(b_{k}  b_{i})\partial_{k}u_{i}^{\varepsilon}\nonumber\\
 &\partial_{i}\Pi ^{\varepsilon}u_{i}+ \partial_{i}\Pi u_{i}^{\varepsilon}=\partial_{i}(\Pi^{\varepsilon}u_{i}+ \Pi u_{i}^{\varepsilon})
\nonumber \\
 &\partial_{k}(u_{k}  b_{i})^{\varepsilon}b_{i}  +\partial_{k}( u_{k}  b_{i})b_{i}^{\varepsilon}=\partial_{k}( u_{k}  b_{i}b_{i}^{\varepsilon})+\partial_{k}(u_{k}  b_{i})^{\varepsilon}b_{i}-( u_{k}  b_{i})\partial_{k}b_{i}^{\varepsilon}\nonumber\\
      &\partial_{k}(b_{k}  u_{i})^{\varepsilon}b_{i}+\partial_{k}(b_{k}  u_{i})b_{i}^{\varepsilon}=\partial_{k}(b_{k}  u_{i}b_{i}^{\varepsilon})+\partial_{k}(b_{k}  u_{i})^{\varepsilon}b_{i}-(b_{k}  u_{i})\partial_{k}b_{i}^{\varepsilon}.\label{3.2}
 \end{align}
Employing \eqref{key0} and \eqref{2.14}, we see that
\begin{align}
&\nabla\times[(\nabla\times b)\times b]\cdot b^{\varepsilon}+ \nabla\times[(\nabla\times b)\times b]^{\varepsilon}\cdot b\nonumber\\ =&\f12\{\text{div}([\text{div}(b\otimes b)]^{\varepsilon}\times b)+\text{div}([\text{div}(b\otimes b)]\times b^{\varepsilon})\}\nonumber\\&+\f12\{\text{div}(b\otimes\vec{j})\cdot b^{\varepsilon}+[\text{div}(b\otimes b)]\cdot \vec{j}^{\varepsilon}+\text{div}(b\otimes\vec{j})^{\varepsilon}\cdot b+[\text{div}(b\otimes b)]^{\varepsilon} \cdot \vec{j}\nonumber\\&- \text{div}(\vec{j}\otimes b)^{\varepsilon}\cdot b- \text{div}(\vec{j}\otimes b)\cdot b^{\varepsilon}\}\nonumber\\
=&\f12 \text{div}([\text{div}(b\otimes b)]^{\varepsilon}\times b)+\f12\text{div}([\text{div}(b\otimes b)]\times b^{\varepsilon}) \nonumber\\&+\f12\text{div}[b (b\cdot j^{\varepsilon})] +\f12\text{div}[b (b^{\varepsilon}\cdot j)] +\f12[\partial_{k}(b_{k}b_{i})^{\varepsilon}\vec{j}_{i}-(b_{k}b_{i}) \partial_{k}j_{i}^{\varepsilon} +\partial_{k}(b_{k}\vec{j}_{i})^{\varepsilon}b_{i}-(b_{k}j_{i}) \partial_{k}b _{i}^{\varepsilon}]\nonumber\\&- \f12 \text{div}(\vec{j} b\cdot b^{\varepsilon})-\f12[\partial_{k}(\vec{j}_{k}b_{i})^{\varepsilon}b_{i}
-(\vec{j}_{k}b_{i})\partial_{k}b_{i}^{\varepsilon}].\label{3.3}
 \end{align}
Inserting \eqref{3.1} and \eqref{3.2} into\eqref{3.3} , we observe that
$$\ba
&\f{(u_{i}^{\varepsilon}u_{i})_{t} +(b_{i}b_{i}^{\varepsilon})_{t}}{2}
+\f{\partial_{k}(u_{k}  u_{i}u_{i}^{\varepsilon})
-\partial_{k}(b_{k}  b_{i}u_{i}^{\varepsilon})+\partial_{k}( u_{k}  b_{i}b_{i}^{\varepsilon})-\partial_{k}( u_{k}  b_{i}b_{i}^{\varepsilon}) }{2}
\\ &+\f{\partial_{i}(\Pi^{\varepsilon}u_{i}+ \Pi u_{i}^{\varepsilon})}{2}
+\f14 \text{div}([\text{div}(b\otimes b)]^{\varepsilon}\times b)+\f14\text{div}([\text{div}(b\otimes b)]\times b^{\varepsilon}) \\ &+\f14\text{div}[b (b\cdot j^{\varepsilon})]+\f14\text{div}[b (b^{\varepsilon}\cdot j)]- \f14 \text{div}(\vec{j} b\cdot b^{\varepsilon})\\
=&-\f12[\partial_{k}(u_{k}  u_{i})^{\varepsilon}u_{i}-
 u_{k}  u_{i}\partial_{k}u_{i}^{\varepsilon}]+\f12[\partial_{k}(b_{k}  b_{i})^{\varepsilon}u_{i}-(b_{k}  b_{i})\partial_{k}u_{i}^{\varepsilon}+\partial_{k}(b_{k}  u_{i})^{\varepsilon}b_{i}-(b_{k}  u_{i})\partial_{k}b_{i}^{\varepsilon}]\\
 &+\f14[\partial_{k}(\vec{j}_{k}b_{i})^{\varepsilon}b_{i}
-(\vec{j}_{k}b_{i})\partial_{k}b_{i}^{\varepsilon}]\\ &
-\f14[\partial_{k}(b_{k}b_{i})^{\varepsilon}\vec{j}_{i}-(b_{k}b_{i}) \partial_{k}j_{i}^{\varepsilon} +\partial_{k}(b_{k}\vec{j}_{i})^{\varepsilon}b_{i}-(b_{k}j_{i}) \partial_{k}b _{i}^{\varepsilon}].
\ea$$
Exactly as the derivation of \eqref{2.26}, we discover that
$$
\ba
&\int_{\mathbb{T}^{3}}\nabla\varphi_{\varepsilon}(\ell)\cdot\delta u(\ell)|\delta  u(\ell)|^{2}d\ell
= \partial_{k}\B[u_{k}(  b_{i}^{2} )^{\varepsilon}-(u_{k} u_{i}^{2} )^{\varepsilon}\B]+2\partial_{k}(u_{k} u_{i}  )^{\varepsilon} u_{i}   -2u_{k}\partial_{k} u_{i} ^{\varepsilon}u_{i},\\
&\int_{\mathbb{T}^{3}}\nabla\varphi_{\varepsilon}(\ell)\cdot\delta b(\ell) | \delta  u(\ell)\cdot\delta b(\ell)|  d\ell\\
=&\partial_{k}\B[b_{k} (u_{i}v_{i})^{\varepsilon}-(b_{k}u_{i}b_{i})^{\varepsilon}\B]
+\partial_{k}(b_{k}b_{i})^{\varepsilon}u_{i}
-b_{k}b_{i}\partial_{k}u_{i}^{\varepsilon}
 +\partial_{k}(b_{k}u_{i})^{\varepsilon}b_{i}
-b_{k}u_{i}\partial_{k}b_{i}^{\varepsilon}, \\ &\int_{\mathbb{T}^{3}}\nabla\varphi_{\varepsilon}(\ell)\cdot\delta u(\ell)|\delta  b(\ell)\cdot\delta  b(\ell)| d\ell
= \partial_{k}\B[u_{k}( b_{i}^{2} )^{\varepsilon}-(\vec{j}_{k}b_{i}^{2} )^{\varepsilon}\B]+2\partial_{k}(u_{k} b_{i}  )^{\varepsilon} b_{i}-2u_{k}\partial_{k}  b_{i} ^{\varepsilon}b_{i}.
\ea $$
Recall \eqref{2.26} and \eqref{2.17}, one obtain
$$\int_{\mathbb{T}^{3}}\nabla\varphi_{\varepsilon}(\ell)\cdot\delta \vec{j}(\ell)|\delta  b(\ell)\cdot\delta  b(\ell)| d\ell
= \partial_{k}\B[\vec{j}_{k}( b_{i}^{2} )^{\varepsilon}-(\vec{j}_{k}b_{i}^{2} )^{\varepsilon}\B]+2\partial_{k}(\vec{j}_{k} b_{i}  )^{\varepsilon} b_{i}-2\vec{j}_{k}\partial_{k}  b_{i} ^{\varepsilon}b_{i},$$
and
\be\ba
&\int_{\mathbb{T}^{3}}\nabla\varphi_{\varepsilon}(\ell)\cdot\delta b(\ell) | \delta  j(\ell)\cdot\delta b(\ell)|  d\ell\\
 =&\partial_{k}\B[b_{k} (\vec{j}_{i}v_{i})^{\varepsilon}-(b_{k}\vec{j}_{i}b_{i})^{\varepsilon}\B]
+\partial_{k}(b_{k}b_{i})^{\varepsilon}\vec{j}_{i}
-b_{k}b_{i}\partial_{k}\vec{j}_{i}^{\varepsilon}
 +\partial_{k}(b_{k}\vec{j}_{i})^{\varepsilon}b_{i}
-b_{k}\vec{j}_{i}\partial_{k}b_{i}^{\varepsilon}.
\ea\ee
A similar procedure for \eqref{key2}, we write
 \begin{align}
&D(u,b,\vec{j};\varepsilon)\nonumber\\
=& -\f14\int_{\mathbb{T}^{3}}\nabla\varphi_{\varepsilon}(\ell)\cdot\delta u(\ell)|\delta  u(\ell)|^{2}d\ell -\f14\int_{\mathbb{T}^{3}}\nabla\varphi_{\varepsilon}(\ell)\cdot\delta b(\ell) | \delta  u(\ell)\cdot\delta b(\ell)|  d\ell\nonumber\\&+\f12 \int_{\mathbb{T}^{3}}\nabla\varphi_{\varepsilon}(\ell)\cdot\delta u(\ell)|\delta  b(\ell)\cdot\delta  b(\ell)| d\ell\nonumber\\&+ \f18\int_{\mathbb{T}^{3}}\nabla\varphi_{\varepsilon}(\ell)\cdot\delta\vec{ j}(\ell) |\delta  b(\ell)|^{2} d\ell-\f14\int_{\mathbb{T}^{3}}\nabla\varphi_{\varepsilon}(\ell)\cdot\delta b(\ell) | \delta  \vec{j}(\ell)\cdot\delta b(\ell)|  d\ell\nonumber\\=&-\pi\int_{0}^{\infty}r^{3}\varphi'( r)dr\int_{\partial B }\f{\zeta \cdot[b(x+ \zeta r\varepsilon)-b(x)]
[b(x+\zeta r\varepsilon)-b(x)]^{2}\f{d\sigma(\zeta)}{4\pi}}{r\varepsilon }\nonumber\\&-\pi\int_{0}^{\infty}r^{3}\varphi'( r)dr\int_{\partial B }\f{\zeta \cdot[b(x+ \zeta r\varepsilon)-b(x)]
[b(x+\zeta r\varepsilon)-b(x)]^{2}\f{d\sigma(\zeta)}{4\pi}}{r\varepsilon }\nonumber\\&+2\pi\int_{0}^{\infty}r^{3}\varphi'( r)dr\int_{\partial B }\f{\zeta \cdot[b(x+ \zeta r\varepsilon)-b(x)]
[b(x+\zeta r\varepsilon)-b(x)]^{2}\f{d\sigma(\zeta)}{4\pi}}{r\varepsilon }\nonumber\\&+\f12\pi\int_{0}^{\infty}r^{3}\varphi'( r)dr\int_{\partial B }\f{\zeta \cdot[\vec{j}(x+ \zeta r\varepsilon)-\vec{j}(x)]
[  b(x+\zeta r\varepsilon)-b(x) ]^{2} \f{d\sigma(\zeta)}{4\pi}}{r\varepsilon }\nonumber\\
&-\pi\int_{0}^{\infty}r^{3}\varphi'( r)dr\int_{\partial B }\f{\zeta \cdot[b(x+ \zeta r\varepsilon)-b(x)]
[(\vec{j}(x+\zeta r\varepsilon)-\vec{j}(x))(b(x+\zeta r\varepsilon)-b(x))] \f{d\sigma(\zeta)}{4\pi}}{r\varepsilon}.\nonumber
 \end{align}
 It follows from  \eqref{key3} and the definition of $S_{i}$ that
$$\ba  D(u,b,\vec{j}) =&\lim_{\varepsilon\rightarrow0}D(v,\vec{j};\varepsilon)
\\=&-\f34 S_{3}(u,u,u)-\f34 S_{4}(u,b,b)+\f{3}{2}S_{5}(b,u,b)+\f38S_{1}(\vec{j},b,b)-\f{3}{4} S_{2}(b,\vec{j},b).
\ea $$
This leads to the desired results.
\end{proof}
\subsection{Exact relationship of  Magnetic helicity in the HMHD system}

\begin{proof}[Proof of Theorem \ref{the1.5}]
In the light of   $\eqref{VI}_{2}$  and \eqref{non2}, we obtain an equivalent alternative formulation  of $\eqref{hallMHD}_{2}$
\be\label{035}
b_{t}+\nabla\times(b\times u)+\nabla\times[\text{div}(b\otimes b)]=0.
\ee
By arguing as was done to obtain \eqref{2.20},
abusing notation slightly, we conclude by \eqref{hmpotentialeq} that
\be\label{3.6}
A_{t}-u\times b+\text{div}(b\otimes b)  +\nabla  \pi=0.\ee
Thanks  to\eqref{035} and \eqref{3.6}, we arrive at
 \be\ba\label{3.7}
&(A^{\varepsilon}b)_{t}+(Ab^{\varepsilon})_{t} -(u\times b)b^{\varepsilon}-(u\times b)^{\varepsilon}b-\nabla\times(u\times b)^{\varepsilon}\cdot A-\nabla\times(u\times b)\cdot A^{\varepsilon}+
\text{div}(b\otimes b)^{\varepsilon}b\\&+\text{div}(b\otimes b)b^{\varepsilon}  +\nabla \pi^{\varepsilon}b +\nabla \pi b^{\varepsilon}+\nabla\times[\text{div}(b\otimes b)]^{\varepsilon}A+\nabla\times[\text{div}(b\otimes b)]A^{\varepsilon}=0.
\ea\ee
It follows from $ \eqref{VI}_{3}$ that
$$\ba
A\cdot[\nabla\times(u\times b)^{\varepsilon}] =&\text{div}[(u\times b)^{\varepsilon}\times A]+(u\times b)^{\varepsilon}\cdot(\nabla\times A)
\\=&\text{div}[(u\times b)^{\varepsilon}\times A]+(u\times b)^{\varepsilon}\cdot b.
\ea$$
and
$$\ba
A^{\varepsilon}\cdot[\nabla\times(u\times b)]
 =&\text{div}[(u\times b)\times A^{\varepsilon}]+(u\times b)\cdot b^{\varepsilon}.
\ea$$
Inserting this \eqref{h1} and \eqref{h2} into \eqref{3.7},
we remark that
$$\ba
&(A^{\varepsilon}b)_{t}+(Ab^{\varepsilon})_{t}-\text{div}[(u\times b)^{\varepsilon}\times A]-(u\times b)^{\varepsilon}\cdot b-\text{div}[(u\times b)\times A^{\varepsilon}]-(u\times b)\cdot b^{\varepsilon}\\&+ \text{div}([\text{div}(b\otimes b)]^{\varepsilon}\times A)
 +\text{div}([\text{div}(b\otimes b)]\times A^{\varepsilon}) +\text{div}[ \pi^{\varepsilon}b+ \pi b^{\varepsilon}]\\
 =&-2[ \text{div}(b\otimes b)^{\varepsilon}b+\text{div}(b\otimes b)b^{\varepsilon}].
\ea$$
which together with \eqref{h3} implies that
 $$\ba
&\f{(b^{\varepsilon}A)_{t}+ (bA^{\varepsilon})_{t}}{2}+\f12\text{div}[(u\times b)^{\varepsilon}\times \vec{A}]+\f12\text{div}[(u\times b)\times \vec{A}^{\varepsilon}] + \f12 \text{div}([\text{div}(b\otimes b)]^{\varepsilon}\times \vec{A})\\&
 +\f12\text{div}([\text{div}(b\otimes b)]\times \vec{A}^{\varepsilon}) +\f12\text{div}[ \pi^{\varepsilon}b+ \pi b^{\varepsilon}]+\f12 \partial_{k}(b_{k}  b_{i}b_{i}^{\varepsilon})\\=&-[\partial_{k}(b_{k}  b_{i})^{\varepsilon}b_{i}-(b_{k}  b_{i})\partial_{k}
 b_{i}^{\varepsilon}]  +\f12(u\times b)^{\varepsilon}\cdot b +\f12(u\times b)\cdot b^{\varepsilon}.
\ea$$
Just notice
$u,b\in L^{3}(0,T;L^{3}(\mathbb{T}^{3})) $ ensures that the limit
$\f12(u\times b)^{\varepsilon}\cdot b +\f12(u\times b)\cdot b^{\varepsilon}$ is zero as $\varepsilon\rightarrow0.$ The rest proof is the same as the argument in previous  subsection. We omit the detail here.
This completes the proof.
\end{proof}
\section{Conclusion}
The first 4/3 relation for
mixed moments of the velocity and temperature fields was due to Yaglom in \cite{[Yaglom]}. This type  four-thirds law   exists in a large number of  turbulence models (see e.g. \cite{[AOAZ],[AB],[Chkhetiani1],[Chkhetiani],
 [GPP],[PP1],[PP2],[Podesta],[PFS],[YRS],[WWY1],[Galtier1],[FGSMA],
 [HVLFM],[Galtier]}). The nonlinear terms in these models are almost all in terms of convection type rather than Hall type.
Making full use structure of the Hall term, we present four 4/3 laws of  the dissipation rates of energy of energy and  magnetic helicity in electron and Hall
magnetohydrodynamic equations. It is worth pointing out that the  four-thirds relation \eqref{HMHDYaglom4/3law} or  \eqref{EMHDYaglom4/3law} is different from the known results   in \cite{[Galtier1],[Galtier],[BG]}. Though 2/15 law of magnetic helicity in EMHD system was derived by Chkhetiani in \cite{[Chkhetiani2]} and other exact relations for the magnetic helicity in the HMHD equations were  presented by Banerjee-Galtier in \cite{[BG]},
the 4/3 law obtained here
 does not exist in the known literature. Hence, Theorem \ref{the1.1} and \ref{the1.5} are the first results in this direction. The exact laws in Theorem  \ref{the1.1}-\ref{the1.5} can be viewed as the generalized Yaglom type law.

 It is useful to understand  the difference between the standard MHD equations involving the  convection terms  the HMHD equations containing   Hall term.
 It should be remarked that 4/3 relation   \eqref{EMHDYaglom4/3law}  of the energy in the EMHD equations is similar to the one for the helicity in the Euler equations in \cite{[WWY1]} and 4/3 law  \eqref{EMHDMHYaglom4/3law}   for the  magnetic helicity  is the same as the one for the energy  in the Euler equations in \cite{[DR]}.
 Moreover, the results closely related to generalized  Onsager conjecture that the critical regularity of weak solutions ensures that the
 conserved law  is valid (see Corollary \ref{coro1.6}). The  Onsager conjecture and its  generalized version can be found \cite{[CCFS],[BGSTW],[BT],[CKS],[Chae1],
[CET],[DKL],[De Rosa],[DR],[DE],
[Eyink0],[WZ]}.

It is an interesting question to derive \eqref{EMHDMHYaglom4/3law} and \eqref{HMHDMHYaglom4/3law} via the corresponding K\'arm\'an-Howarth type equations.
 \section*{Acknowledgement}

 Wang was partially supported by  the National Natural
 Science Foundation of China under grant (No. 11971446, No. 12071113   and  No.  11601492) and  sponsored by Natural Science Foundation of Henan.

\end{document}